\newtheorem*{Thm*}{Theorem}
\newtheorem{Thm}{Theorem}
\newtheorem{Lem}[Thm]{Lemma}
\newtheorem{Cor}[Thm]{Corollary}
\newtheorem{Def}[Thm]{Definition}
\newtheorem{Rem}[Thm]{Remark}
\newtheorem{algorithm}[Thm]{Algorithm}
\begin{document}

\title{Modified Recursive QAOA for Exact Max-Cut Solutions on Bipartite Graphs: \\ Closing the Gap Beyond QAOA Limit}

\author{Eunok Bae \orcidlink{0000-0001-7531-0992}}
\email{eobae@kias.re.kr}
\affiliation{School of Computational Sciences, Korea Institute for Advanced Study (KIAS), Seoul 02455, Korea}

\author{Hyukjoon Kwon \orcidlink{0000-0001-5520-0905}}
\affiliation{School of Computational Sciences, Korea Institute for Advanced Study (KIAS), Seoul 02455, Korea}

\author{V Vijendran\orcidlink{0000-0003-3398-1821}}
\affiliation{Centre for Quantum Computation and Communication Technologies (CQC2T), Department of Quantum Science and Technology, Research School of Physics, Australian National University, Acton 2601, Australia}
\affiliation{A*STAR Quantum Innovation Center (Q.InC), Agency for Science, Technology and Research (A*STAR),
2 Fusionopolis Way, Innovis \#08-03, Singapore 138634, Republic of Singapore}

\author{Soojoon Lee\orcidlink{0000-0003-2925-1017}}
\email{level@khu.ac.kr}
\affiliation{Department of Mathematics and Research Institute for Basic Sciences, Kyung Hee University, Seoul 02447, Korea}
\affiliation{School of Computational Sciences, Korea Institute for Advanced Study (KIAS), Seoul 02455, Korea}

\date{\today}

\begin{abstract}
 Quantum Approximate Optimization Algorithm (QAOA) is a quantum-classical hybrid algorithm proposed with the goal of approximately solving combinatorial optimization problems such as the MAX-CUT problem. It has been considered a potential candidate for achieving quantum advantage in the Noisy Intermediate-Scale Quantum era and has been extensively studied. However, the performance limitations of low-level QAOA have also been demonstrated across various instances. In this work, we first analytically prove the performance limitations of level-1 QAOA in solving the MAX-CUT problem on bipartite graphs. To this end, we derive an upper bound for the approximation ratio based on the average degree of bipartite graphs. 
 Second, we demonstrate that Recursive QAOA (RQAOA), which recursively reduces graph size using QAOA as a subroutine, outperforms the level-1 QAOA. However, the performance of RQAOA exhibits limitations as the graph size increases.
 Finally, we show that RQAOA with a restricted parameter regime can fully address these limitations. Surprisingly, this modified RQAOA always finds the exact maximum cut for any bipartite graphs and even for a more general graph with parity-signed weights.
\end{abstract}

\maketitle

\section{Introduction}
  Variational Quantum Algorithms (VQAs)~\cite{MAR+21} have emerged as a practical solution for harnessing quantum computation in the Noisy Intermediate-Scale Quantum (NISQ) era~\cite{Pre18, MBB+18}. NISQ devices are characterized by their relatively small number of qubits, their susceptibility to noise, and a lack of error correction, which makes them impractical for running traditional deep quantum algorithms like Shor's algorithm~\cite{Shor} designed for fault-tolerant quantum systems. VQAs were introduced to address these limitations, employing a hybrid quantum-classical framework~\cite{PMS+14,FGG14,MBB+18,MAR+21,TCC+22}. In these algorithms, the quantum part prepares a parameterized quantum state and measures observables, while the classical part optimizes these parameters to either minimize or maximize an objective function, such as an energy eigenvalue or a cost function. VQAs provide a powerful and practical tool for solving various problems in the NISQ era, from quantum chemistry to optimization and quantum machine learning~\cite{PMS+14,FGG14,MBB+18,SK19,MAR+21}.
 
 One of the most prominent examples of VQAs is Quantum Approximate Optimization Algorithm (QAOA)~\cite{FGG14}. QAOA is designed to tackle combinatorial optimization problems, where the goal is to find an approximate solution that is close to the optimal one. The algorithm works by alternating between applying a problem-specific operator and mixing operators in a parameterized quantum circuit. The parameters of the circuit are optimized using a classical optimizer to improve the solution iteratively. QAOA has gained significant attention for its potential to outperform classical algorithms in solving combinatorial optimization problems like the MAX-CUT problem, where it aims to find a cut that maximizes the number of edges between two partitions of vertices of a graph~\cite{FGG14,WHJR18,MAR+21,ZWC+20}. 
 Despite its promise for near-term quantum applications, it has been established that QAOA, when applied at any constant level, exhibits limited performance in solving the MAX-CUT problem for several instances~\cite{Has19,BKKT19,FGG20,Mar21,BM22}.
 Several variants and extensions of QAOA have been proposed to address its limitations and enhance its performance across a broader range of problems~\cite{BKKT19,HWO+19,EMW21,HLO+22,ZTB+22,VDKAL24}.

 Recursive QAOA (RQAOA) is one of the variants of QAOA proposed to enhance the performance on complex instances and scales to larger problems by reducing the problem size~\cite{BKKT19}. 
 The performance of QAOA can be constrained by the $Z_2$ symmetry of its quantum states and the geometric locality of the ansatz, which means that the cost operators only involve interactions between nearest neighbor qubits in the underlying graph~\cite{BKKT19}. To address these limitations, RQAOA reduces the problem size iteratively by fixing certain variables based on edge correlation analysis. This iterative reduction can lead to more precise solutions as the recursion progresses through the new connections between previously unlinked vertices, effectively counteracting the locality constraints of QAOA. 
 Although RQAOA has been less extensively investigated compared to other QAOA variants, it is gaining increasing attention as a promising approach for NISQ devices~\cite{BKKT19,BGGS21,BKKT22,PJBD24,BL24,FPW+24}. 
 
 It has been shown that RQAOA outperforms the original QAOA for many problem instances. Numerical evidence on random graphs shows that the level-1 RQAOA significantly outperforms QAOA at the same level for solving the MAX-CUT problem~\cite{BKKT19}. Another experimental evidence found by the same authors demonstrates that the level-1 RQAOA is even competitive with Newman's classical algorithm for general $k$-colorable graphs for solving the MAX-$k$-CUT problem (especially, when $k=3$) which is a generalization of MAX-CUT, where the goal is to partition the vertices of a graph into $k$ disjoint subsets such that the number of edges between different subsets is maximized~\cite{BKKT22}. Furthermore, there have been rigorous proofs supporting this argument that RQAOA surpasses the original QAOA in certain special cases. It has been proved that the level-1 RQAOA achieves the optimal approximation ratio while that of any constant level-$q$ QAOA with $q<\frac{n}{4}$ is less than $1-\frac{1}{4q+2}$ in solving the MAX-CUT problem on cycle graphs, where $n$ is the number of vertices of a cycle graph~\cite{BKKT19}. Recently, a separation of the performance between RQAOA and QAOA for solving the MAX-CUT problem on complete graphs with $2n$ vertices has been proved~\cite{BL24}. It was analytically shown that in this specific case, the level-1 RQAOA achieves an approximation ratio of 1, whereas the approximation ratio of the original level-1 QAOA is strictly upper-bounded as $1-\frac{1}{8n^2}$. 
 
 Our first significant result in this work supports the claim that RQAOA performs better than QAOA. We prove the performance limitations of the level-1 QAOA for solving the MAX-CUT problem on bipartite graphs in terms of the approximation ratio. Among the two bounds we derive in Theorem~\ref{thm:bipar}, the first bound is shown to be tight for complete bipartite graphs, while the second bound, though slightly looser, is obtained using the average degree which is a key characteristic representing the graph's density. Our results indicate that, intuitively, the performance of the level-1 QAOA for solving the MAX-CUT problem tends to degrade as a bipartite graph becomes denser and as a complete bipartite graph becomes larger, respectively. In particular, as the size of the complete bipartite graph increases, the performance of the level-1 QAOA becomes worse approaching that of random guessing, in contrast to the case of complete graphs, which converge to one~\cite{BL24}.

 Unlike other provable cases such as cycle graphs~\cite{BKKT19} and complete graphs~\cite{BL24}, in bipartite graphs, as the reduction process of RQAOA progresses, the structure of the graph deteriorates, causing it to lose its bipartite nature. This issue becomes more pronounced in graphs with a large number of vertices, where the graph transforms into a weighted graph. Consequently, analytically proving the performance of RQAOA in such scenarios remains a challenging problem.
 Although it has not yet been proved whether RQAOA can achieve the optimal performance in solving the MAX-CUT problem on bipartite graphs, we numerically show that the level-1 RQAOA has a much better performance than the level-1 QAOA for this problem instance in Sec~\ref{sec:rqaoa}. 

 While RQAOA offers crucial improvements over the original QAOA, it also faces several limitations as identified in recent research~\cite{PJBD24} and a few works have generalized and extended RQAOA to improve its performance~\cite{FKS+24,BH23,WNL23,PJBD24}. 
 From the numerical results we obtain for random weighted bipartite graphs with 128 and 256 vertices in Sec~\ref{sec:rqaoa}, RQAOA appears to fall short of achieving optimal performance. Although employing a better optimization method might lead to improved outcomes, it remains unclear whether this issue stems from the limitations of RQAOA itself in determining edge correlations or from challenges in the optimization process during QAOA iterations. 

 In this paper, we propose a parameter setting method to enhance the performance of RQAOA and prove that our modified RQAOA can always find the exact MAX-CUT solution for positive weighted bipartite graphs. 
 Although similar parameter setting schemes applied to QAOA have been proposed~\cite{SHS+24,SLL+23,LG24} and they have focused on finding optimal parameters, 
 our contribution lies in rigorously proving that modifying the parameter search space to preserve the graph structure significantly improves RQAOA performance for certain cases, regardless of whether optimal parameters are found during QAOA iterations.

This paper is organized as follows. In Sec.~\ref{sec:pre}, we introduce the MAX-CUT problem, which we aim to solve, and provide a brief overview of the quantum algorithms used to tackle this problem: QAOA and RQAOA. In Sec.~\ref{sec:lim_qaoa}, we analytically prove that the level-1 QAOA has limited performance in solving the MAX-CUT problem on bipartite graphs, deriving an upper bound for the approximation ratio based on the average degree of the graph. In Sec.~\ref{sec:rqaoa}, we numerically show that 
while the level-1 RQAOA outperforms the level-1 QAOA in solving the MAX-CUT problem on bipartite graphs, it also fails to find the exact solution when the graph size increases. We propose a modified RQAOA to improve its performance and prove that our modified RQAOA can exactly solve the MAX-CUT problem on not only positive weighted bipartite graphs but also more general weighted graphs called a graph with parity-signed weights ${G=(V,E,w,\sigma)}$ in Sec.~\ref{sec:new_rqaoa}. Finally, we summarize our results and discuss the potential applicability of our modified RQAOA to other instances in Sec.~\ref{sec:conclusion}.

\section{Main Results}
\label{sec:main}
In this section, we preview our main results. Our first main result is to prove the performance limitations of the level-1 QAOA in solving the MAX-CUT problem on bipartite graphs. The approximation ratio is upper bounded in terms of the average degree which is a key property representing the graph's density. More details can be found in Sec~\ref{sec:lim_qaoa}.

\vspace{0.3cm}
\textbf{Theorem~\ref{thm:bipar}.}
\label{thm:bipar}
\textit{(Limitation of QAOA$_1$ for solving MAX-CUT on bipartite graphs) Let $G=(V,E)$ be a bipartite graph and $\alpha_1$ be the approximation ratio of the level-1 QAOA for solving the MAX-CUT problem on $G$. Then
$$
\alpha_1 \le \frac{1}{2} + \sum_{d=1}^{d_{\max}} \frac{d |D_d|}{2|E|} \left[ \frac{1}{2\sqrt{d}}\left(\frac{\sqrt{d-1}}{\sqrt{d}}\right)^{d-1} \right] \le \frac{1}{2} + \frac{1}{2\sqrt{e}}\left( \frac{1}{\sqrt{d_{ave}}} + \frac{\sqrt{e}-1}{d_{ave}} \right),
$$
where $D_d$ is the set of vertices with degree $d$ and $d_{ave} = \frac{2|E|}{|V|}$ denotes the average degree of $G$.}

\vspace{0.3cm}
In Sec.~\ref{sec:rqaoa}, we numerically show that the level-1 RQAOA outperforms the level-1 QAOA for solving the MAX-CUT problem on weighted bipartite graphs, and we suspect that RQAOA might not be able to achieve the optimal approximation ratio in this problem instance. Furthermore, it appears that there is an optimization issue in the QAOA subroutine that needs to be addressed for RQAOA to solve the problem with a better approximation ratio. In this vein, we propose a modified RQAOA with a parameter setting strategy to enhance the performance for this problem instance as follows.

\vspace{0.3cm}
\textbf{Algorithm~\ref{alg:modified_rqaoa}.} 
(Modified RQAOA$_1$ equipped with a parameter setting method)
We only modify the first step of RQAOA$_1$ and the rest part of our algorithm is the same as the original RQAOA$_1$.
\begin{enumerate}
    \item Apply the level-$1$ QAOA to find the optimal parameters $({\beta}^*,{\gamma}^*)$ to maximize $F_1(\mathbf{\beta},\mathbf{\gamma})$ in the restricted domain where $0<|\gamma| \le \frac{\pi}{2 w^*}$ and $0<\beta<\frac{\pi}{4}$ with $w^* = \max_{(i,j) \in E} |w(ij)|$.
\end{enumerate}

\vspace{0.3cm}
Our second main result is to rigorously prove that this modified RQAOA can achieve the approximation ratio of 1 in solving the MAX-CUT problem on a graph with parity-signed weights which is a generalization of bipartite graphs as shown in Sec.~\ref{sec:new_rqaoa}.

\vspace{0.3cm}
\textbf{Theorem~\ref{thm:main}.}
\textit{Our Algorithm~\ref{alg:modified_rqaoa} can exactly solve the MAX-CUT problem on a graph with parity-signed weights.}


\section{Premilinaries}
\label{sec:pre}
\subsection{MAX-CUT problem}
Let $G=(V,E)$ be a (undirected) graph with the set of vertices $V$ and the set of edges ${E=\{(i,j):i,j \in V\}}$. The MAX-CUT problem is a well-known combinatorial optimization problem that aims to split $V$ into two disjoint subsets such that the number of edges spanning the two subsets is maximized. The MAX-CUT problem can be formulated by maximizing the cost function 
 \[
 C(\mathbf{x})=\frac{1}{2} \sum_{(i,j)\in E} \left(1-x_i x_j \right)
 \]
 for $\mathbf{x}=(x_1,x_2,\dots,x_n) \in \{-1,1\}^n$.

\subsection{QAOA}

QAOA can be viewed as a discrete version of adiabatic quantum computing~\cite{FGG14}. The design of QAOA involves constructing a parameterized quantum circuit that alternates between applying the problem Hamiltonian $H_C$ (which encodes the optimization problem) and the driving Hamiltonian $H_B$ (which ensures broad exploration of the solution space). Here, we only focus on the MAX-CUT problem which can be converted to the following problem Hamiltonian
 \[
 H_C=\frac{1}{2} \sum_{(i,j)\in E} \left( I-Z_i Z_j \right),
 \]
 where $Z_i$ is the Pauli operator $Z$ acting on the $i$-th qubit.
The level-$q$ QAOA, denoted by QAOA$_q$, can be described as the following algorithm.
 \begin{algorithm}[QAOA$_{q}$~\cite{FGG14}]
The QAOA$_{q}$ is as follows.
 \begin{enumerate}
     \item Prepare the initial state $\ket{+}^{\otimes n}$.
     \item Generate a variational wave function
     \[\ket{\psi_q(\mathbf{\beta},\mathbf{\gamma})}
     =\prod_{j=1}^q e^{-i\beta_jH_B}e^{-i\gamma_jH_C}
     \ket{+}^{\otimes n},
     \]
     where $\mathbf{\beta}=(\beta_1,\ldots,\beta_q)$,
     $\mathbf{\gamma}=(\gamma_1,\ldots,\gamma_q)$, 
     $H_C$ is a problem Hamiltonian, 
     $H_B=\sum_{i=1}^n X_i$ is a driving Hamiltonian, 
     and $X_i$ is the Pauli operator $X$ acting on the $i$-th qubit.
     \item Compute the expectation value 
     \[F_q(\mathbf{\beta},\mathbf{\gamma})=\bra{\psi_q(\mathbf{\beta},\mathbf{\gamma})}H_C\ket{\psi_q(\mathbf{\beta},\mathbf{\gamma}})=\left< H_C\right>\] 
     by performing the measurement in the computational basis.
     \item Apply a classical optimization algorithm to find the optimal parameters 
     \[(\mathbf{\beta}^*,\mathbf{\gamma}^*)= \mathrm{argmax}_{\mathbf{\beta},\mathbf{\gamma}} F_q(\mathbf{\beta},\mathbf{\gamma}).\]
 \end{enumerate}
 \end{algorithm}
  
The approximation ratio $\alpha$ of QAOA$_{q}$ is defined as \[\alpha_q=\frac{F_q(\mathbf{\beta}^*,\mathbf{\gamma}^*)}{C_{\max}},\] 
where $C_{\max}=\max_{\mathbf{x}\in \{-1,1\}^n}C(\mathbf{x})$.

\subsection{Recursive QAOA}

In this section, we provide a brief overview of RQAOA which was introduced to address the limitations of the original QAOA by incorporating a recursive problem reduction strategy~\cite{BKKT19}. RQAOA iteratively reduces the problem size by fixing specific variables based on edge correlation analysis. As the recursion proceeds, this iterative reduction can yield more accurate solutions by establishing new connections between previously unconnected vertices, thereby counteracting the locality constraints of QAOA. For a level-$q$ RQAOA, denoted by RQAOA$_q$, we consider an Ising-like Hamiltonian, which is closely related the weighted MAX-CUT Hamiltonian, given by
$$
H_n=\sum_{(i,j) \in E}J_{i,j}Z_iZ_j,
$$
defined on a graph $G_n=(V,E)$ with $|V|=n$, where $J_{i,j}\in \mathbb{R}$ are arbitrary.
The goal of RQAOA is to approximate 
$$
\max_{\mathbf{x} \in \{-1,1\}^n} \bra{\mathbf{x}}H_n\ket{\mathbf{x}}.
$$
Here, $\ket{\mathbf{x}}=\ket{x_1,\dots,x_n}$ and  
$Z\ket{x_i}=x_i\ket{x_i}$ for each $i=1,\dots,n$.
RQAOA$_q$ can be described as the following algorithm.

\begin{algorithm}[RQAOA$_{q}$~\cite{BKKT19}]
\label{alg:rqaoa}
The level-$q$ RQAOA is as follows.
\begin{enumerate}
    \item Perform the original QAOA$_{q}$ to find the optimal parameters $({\beta}^*,{\gamma}^*)$ to maximize $F_q(\mathbf{\beta},\mathbf{\gamma})$.
    \item Compute the edge expectation values $$M_{ij}=\bra{\psi_q(\mathbf{\beta}^*,\mathbf{\gamma}^*)}Z_iZ_j\ket{\psi_q(\mathbf{\beta}^*,\mathbf{\gamma}^*)}$$ for every edges $(i,j)\in E$.
    \item Pick the edge $(k,l)  =  \mathrm{argmax}_{(i,j) \in E} M_{ij}$
    \item By imposing the constraint $Z_k=\textrm{sgn}(M_{kl})Z_l$, replace it with $H_n$ to obtain 
    \begin{eqnarray*}
    H_n'    =
    \textrm{sgn}(M_{kl})\left[\sum_{(i,k) \in E}J_{ik}Z_iZ_l 
    \right] +    \sum_{i,j\neq k}J_{ij}Z_iZ_j
    \end{eqnarray*}
    \item Call the QAOA recursively to maximize the expected value of a new Ising Hamiltonian $H_{n-1}$ depending on $n-1$ variables:
    \[
    H_{n-1}=\sum_{(i,l) \in E'_0}J'_{ij}Z_iZ_l +\sum_{(i,j) \in E'_1}J'_{ij}Z_iZ_j, 
    \]
    where  
    $$
    E'_0=\{(i,l) : (i,k) \in E\}, 
    E'_1=\{(i,j) : i,j \neq k\}, 
    $$
    and 
    \[
    J'_{ij}=
    \begin{cases}
    \mathrm{sgn}(M_{kl})J_{ik}
     & \mathrm{if}~(i,l) \in E'_0, \\
    J_{ij} & \mathrm{if}~(i,j) \in E'_1.
    \end{cases}
    \]
    \item The recursion stops when the number of variables reaches some threshold value $n_c \ll n$, and find $\mathbf{x}^*=\mathrm{argmax}_{\mathbf{x}\in \{-1,1\}^{n_c}}\bra{\mathbf{x}}H_{n_c}\ket{\mathbf{x}}$ by a classical algorithm.
    \item Reconstruct the original (approximate) solution $\tilde{\mathbf{x}}\in\{-1,1\}^n$ from $\mathbf{x}^*$ using the constraints.
\end{enumerate}
 \end{algorithm}

\section{Limitation of QAOA$_{1}$}
\label{sec:lim_qaoa}

In this section, we analytically prove that the level-1 QAOA has limited performance in solving the MAX-CUT problem on bipartite graphs. We derive the upper bound of the approximation ratio in terms of the average degree of a given graph.

The analytical form for the expectation value of the level-$1$ QAOA has been known in Ref.~\cite{WHJR18}. For all $(i,j) \in E$, 
\begin{eqnarray}
\label{eq:Cij}
\left< C_{ij} \right> &=& \left<\psi_1(\beta, \gamma) \right|\frac{1}{2}\left( I - Z_iZ_j\right) \left| \psi_1(\beta, \gamma) \right> \\ \nonumber
&=& \frac{1}{2} - \frac{1}{4}\sin^2(2\beta)\cos^{d_i+d_j-2(n_{ij}+1)}(1-\cos^{n_{ij}}(2\gamma)) + \frac{1}{4}\sin(4\beta) \sin\gamma \left(\cos^{d_i-1}\gamma + \cos^{d_j-1}\gamma \right),
\end{eqnarray}
where $d_i$ is the degree of the vertex $i$ and $n_{ij}$ is the number of triangles with the edge $(i,j)$. 
Since bipartite graphs are triangle-free, Eq.~(\ref{eq:Cij}) becomes simpler. In this case, we can rewrite the expectation value of the MAX-CUT Hamiltonian of the level-1 QAOA in terms of the vertex degrees instead of the edges as follows.
    \begin{eqnarray*}
        \left< H_C \right> 
        &=& \sum_{(i,j)\in E} \left[\frac{1}{2} + \frac{1}{4}\sin(4\beta) \sin\gamma \left(\cos^{d_i-1}\gamma + \cos^{d_j-1}\gamma \right) \right] \\
        &=& \frac{1}{2}|E|+ \frac{1}{4} \sin(4\beta)\sin\gamma \sum_{d=1}^{d_{\max}} d |D_d| \cos^{d-1}\gamma,
    \end{eqnarray*}
    where $D_d$ is the set of vertices with degree $d$. For bipartite graphs, we know the optimal cut is the number of all edges. Thus, the approximation ratio is
    \begin{eqnarray*}
        \alpha_1 &=& \max_{\beta,\gamma}\left< H_C \right> / C_{\max} \\
        &=& \max_{\gamma} \left[\frac{1}{2} + \frac{1}{2} \sum_{d=1}^{d_{\max}} \frac{d |D_d|}{2|E|} \sin\gamma \cos^{d-1}\gamma \right] \\ 
        &=& \max_{\gamma} \left[\frac{1}{2} + \frac{1}{2} \sum_{d=1}^{d_{\max}} \frac{d |D_d|}{2|E|}  f_d(\gamma) \right],
    \end{eqnarray*}
    where $f_d(\gamma)=\sin\gamma\cos^{d-1}\gamma$.
    Then we can get a bound of $\alpha_1$ by obtaining the upper bound of $f_d(\gamma)$ for each~$d$. Although this bound is obtained quite naturally, but it falls short in capturing the properties of a given bipartite graph. Therefore, by proposing a new bound that utilizes the average degree $d_{ave}$, which reflects one of the important properties of a graph—its density—we can intuitively observe how the density of bipartite graphs impacts the performance of the level-1 QAOA for solving the MAX-CUT problem.
    
\begin{Thm}[Bipartite graph]
\label{thm:bipar}
Let $G=(V,E)$ be a bipartite graph and let $\alpha_1$ be the approximation ratio of the level-1 QAOA for solving the MAX-CUT problem on $G$. Then
$$
\alpha_1 \le \frac{1}{2} + \sum_{d=1}^{d_{\max}} \frac{d |D_d|}{2|E|} \left[ \frac{1}{2\sqrt{d}}\left(\frac{\sqrt{d-1}}{\sqrt{d}}\right)^{d-1} \right] \le \frac{1}{2} + \frac{1}{2\sqrt{e}}\left( \frac{1}{\sqrt{d_{ave}}} + \frac{\sqrt{e}-1}{d_{ave}} \right),
$$
where $d_{ave}$ denotes the average vertex degree of $G$ which can be defined as $\frac{\sum_{v \in V}d_v}{|V|}$, or equivalently, $\frac{2|E|}{|V|}$.
\end{Thm}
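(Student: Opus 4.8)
The plan is to begin from the reduced expression for the approximation ratio already obtained just above the statement,
$$
\alpha_1 = \max_\gamma\left[\frac12 + \frac12\sum_{d=1}^{d_{\max}}\frac{d|D_d|}{2|E|}\,f_d(\gamma)\right],\qquad f_d(\gamma)=\sin\gamma\cos^{d-1}\gamma,
$$
and to bound it by pulling the maximum inside the sum: with the weights $p_d:=\frac{d|D_d|}{2|E|}$ (which satisfy $\sum_d p_d=1$ since $\sum_d d|D_d|=\sum_{v\in V}d_v=2|E|$), we have $\max_\gamma\sum_d p_d f_d(\gamma)\le\sum_d p_d\max_\gamma f_d(\gamma)$. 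A one-variable calculus step finishes this part: solving $f_d'(\gamma)=\cos^{d-2}\gamma\,[\cos^2\gamma-(d-1)\sin^2\gamma]=0$ gives the interior critical point $\sin^2\gamma=1/d$, $\cos^2\gamma=(d-1)/d$, so $\max_\gamma f_d(\gamma)=\frac{1}{\sqrt d}\left(\frac{\sqrt{d-1}}{\sqrt d}\right)^{d-1}$, which yields the first inequality verbatim.

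For the second inequality I would reindex by the vertex-degree distribution instead of the edge one. Writing $q_d:=|D_d|/|V|$ (so $\sum_d q_d=1$ and $\sum_d d\,q_d=d_{ave}$) and using $2|E|=d_{ave}|V|$, one has $p_d=\frac{d\,q_d}{d_{ave}}$, hence the first bound equals $\frac12+\frac{1}{2d_{ave}}\sum_d q_d\,h(d)$ with $h(d):=\sqrt d\left(1-1/d\right)^{(d-1)/2}$. The crux is the pointwise estimate
$$
h(d)\ \le\ 1+\frac{\sqrt d-1}{\sqrt e}\qquad\text{for every integer }d\ge1 .
$$
Granting it, Jensen's inequality for the concave function $\sqrt{\,\cdot\,}$ gives $\sum_d q_d h(d)\le 1+\frac{1}{\sqrt e}\left(\sum_d q_d\sqrt d-1\right)\le 1+\frac{\sqrt{d_{ave}}-1}{\sqrt e}$, and substituting into $\frac12+\frac{1}{2d_{ave}}(\cdots)$ and simplifying (using $\sqrt{d_{ave}}/d_{ave}=1/\sqrt{d_{ave}}$) produces exactly $\frac12+\frac{1}{2\sqrt e}\left(\frac{1}{\sqrt{d_{ave}}}+\frac{\sqrt e-1}{d_{ave}}\right)$.

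It remains to prove the pointwise lemma. The case $d=1$ is equality, $h(1)=1$. For $d\ge2$ I would invoke the classical fact that $\left(1-1/d\right)^d$ increases to $e^{-1}$, so $\left(1-1/d\right)^{d-1}=\left(1-1/d\right)^d/\left(1-1/d\right)<\frac{e^{-1}d}{d-1}$ and therefore $h(d)<\frac{1}{\sqrt e}\cdot\frac{d}{\sqrt{d-1}}$. It then suffices to check $\frac{d}{\sqrt{d-1}}-\sqrt d\le\sqrt e-1$ for all integers $d\ge2$; rationalizing gives the identity $\frac{d}{\sqrt{d-1}}-\sqrt d=\frac{1}{\sqrt{d-1}+(d-1)/\sqrt d}$, whose denominator is increasing in $d$ (both summands are), so the left side is decreasing for $d\ge2$ and is maximized at $d=2$, where it equals $2-\sqrt2$; and $2-\sqrt2\le\sqrt e-1$ is the elementary numerical fact $11-6\sqrt2\le e$.

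The main obstacle is precisely this pointwise lemma: the constant is tight (equality at $d=1$) and the $d=2$ instance is the binding one, so the bound $\left(1-1/d\right)^d<e^{-1}$ must be paired with an essentially sharp algebraic estimate for $\frac{d}{\sqrt{d-1}}-\sqrt d$ — cruder bounds such as the naive $\frac{d}{\sqrt{d-1}}-\sqrt d\le\frac{1}{\sqrt{d-1}}$ lose too much at small $d$ to close the gap. Everything else — the max-inside-the-sum step, the degree-measure reindexing, and the final Jensen application — is routine.
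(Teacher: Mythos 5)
Your proposal is correct, and its skeleton coincides with the paper's: pull the maximum inside the sum, evaluate $\max_\gamma f_d(\gamma)=\frac{1}{\sqrt d}\bigl(\tfrac{\sqrt{d-1}}{\sqrt d}\bigr)^{d-1}$ at the critical point $\cos^2\gamma^*=1-1/d$ (giving the first inequality), then bound each degree term by an expression affine in $\sqrt d$, and finally apply Jensen's inequality for $\sqrt{\cdot}$ over the degree distribution to reach $d_{ave}$. Your pointwise lemma $h(d)\le 1+\frac{\sqrt d-1}{\sqrt e}$ is, after dividing by $d$, exactly the paper's per-degree bound $\frac{1}{\sqrt d}\bigl(1-\tfrac1d\bigr)^{(d-1)/2}\le \frac{1}{\sqrt{ed}}+\frac{1}{d}\bigl(1-\frac{1}{\sqrt e}\bigr)$; the only genuine difference is how it is established. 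The paper routes through the intermediate quantity $\frac{1}{\sqrt{ed+1-e}}$ and only asserts the two auxiliary inequalities (a logarithmic comparison and $ed^2\le(ed+1-e)(\sqrt d+\sqrt e-1)^2$) without carrying them out, whereas you use $(1-1/d)^d\uparrow e^{-1}$ together with the rationalization $\frac{d}{\sqrt{d-1}}-\sqrt d=\bigl(\sqrt{d-1}+\tfrac{d-1}{\sqrt d}\bigr)^{-1}$, monotonicity in $d$, and the binding case $d=2$ (reducing to $11-6\sqrt 2\le e$). Your first step is slightly lossier than the paper's intermediate bound but is compensated by the sharp algebraic estimate, and your version has the merit of being fully self-contained where the paper's corresponding steps are sketched; the reindexing by $q_d=|D_d|/|V|$, the Jensen step, and the closing algebra reproduce the paper's computation exactly.
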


\begin{proof}
In order to get the upper bound of $\alpha_1$, let us first find the optimal value of $f_d(\gamma)$ as follows. Since $f_1(\gamma)=\sin\gamma$, $f_1(\gamma) \le 1$. Observe that $f_d(\gamma)=\sin\gamma\cos^{d-1}\gamma$ and $f'_d(\gamma)=\cos^{d-2}\gamma(d\cos^2\gamma-(d-1))$ for each $d \ge 2$. Then we can find the optimal $\gamma^*$ such that $\cos\gamma^*=(1-\frac{1}{d})^{1/2}$. 
Thus, we have
$$f_d(\gamma) \le f_d(\gamma^*) = \frac{1}{\sqrt{d}}\left(\frac{\sqrt{d-1}}{\sqrt{d}}\right)^{d-1},$$ 
and naturally get the upper bound of the approximation ratio as follows. 
\begin{eqnarray*}
    \alpha_1 
    &\le& \frac{1}{2} + \sum_{d=1}^{d_{\max}} \frac{d |D_d|}{2|E|} \left[ \frac{1}{2\sqrt{d}}\left(\frac{\sqrt{d-1}}{\sqrt{d}}\right)^{d-1} \right] \\
    &\le& \frac{1}{2} + \frac{1}{2}\sum_{d=1}^{d_{\max}} \frac{d |D_d|}{2|E|} \left[ \sqrt{\frac{1}{ed}}+\frac{1}{d} \left( 1-\frac{1}{\sqrt{e}}\right)\right] \\
    &\le& \frac{1}{2} + \frac{1}{2} \sqrt{\sum_{d=1}^{d_{\max}} \frac{|D_d|}{2e|E|}} + \sum_{d=1}^{d_{\max}} \frac{|D_d|}{2|E|}\left( 1-\frac{1}{\sqrt{e}}\right)  \\
    &\le& \frac{1}{2} + \frac{1}{2\sqrt{e}}\left( \frac{1}{\sqrt{d_{ave}}} + \frac{\sqrt{e}-1}{d_{ave}} \right),
\end{eqnarray*}
where $d_{ave}$ denotes the average vertex degree of $G$ which can be defined as $\frac{\sum_{v \in V}d_v}{|V|}$, or equivalently, $\frac{2|E|}{|V|}$, or equivalently, $\frac{2|E|}{\sum_d |D_d|}$. 
To show that the second inequality holds, we can prove that the following inequalities hold separately
$$
 \frac{1}{\sqrt{d}}\left(\frac{\sqrt{d-1}}{\sqrt{d}}\right)^{d-1} \le \frac{1}{\sqrt{ed+1-e}} 
 ~~\text{and}~~ \frac{1}{\sqrt{ed+1-e}} \le \frac{1}{\sqrt{e}d}\left(\sqrt{d} + \sqrt{e} -1 \right)
$$
by showing that
$$
\ln \left[\frac{1}{\sqrt{d}}\left(\frac{\sqrt{d-1}}{\sqrt{d}}\right)^{d-1}\right] \le \ln \left(\frac{1}{\sqrt{ed+1-e}} \right)
 ~~\text{and}~~ ed^2 \le (ed+1-e)\left(\sqrt{d} + \sqrt{e} -1\right)^2.
$$

It follows from the concavity of the function
$$
g(x) := \frac{1}{\sqrt{e}}\sqrt{x} + \left( 1-\frac{1}{\sqrt{e}}\right) x 
$$
that the third inequality holds.  
\end{proof}

\begin{figure}[!ht]
  \centering
  \subfloat[][Bipartite graphs with $p=0.6$]{\includegraphics[width=.4\textwidth]{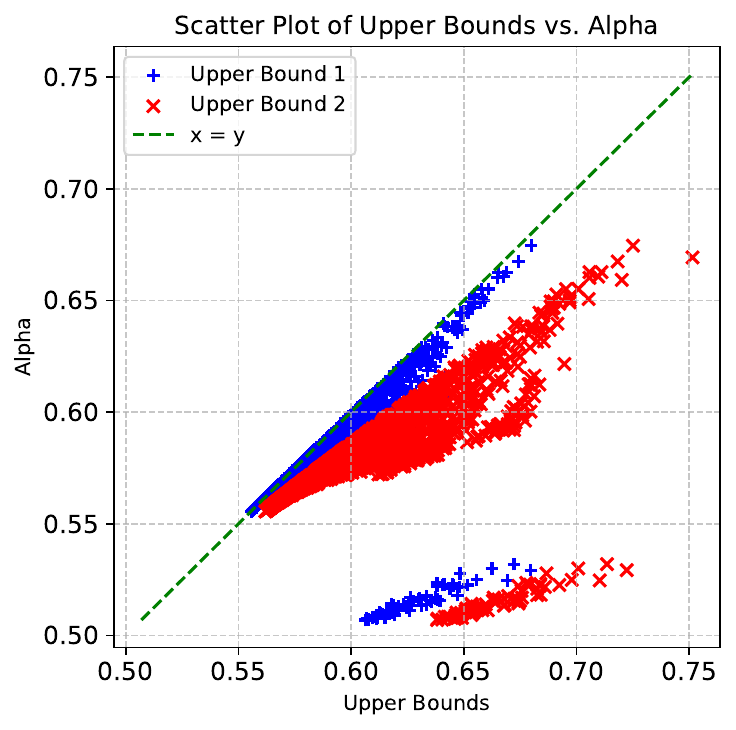}}\quad
  \subfloat[][Bipartite graphs with $p=0.9$]{\includegraphics[width=.4\textwidth]{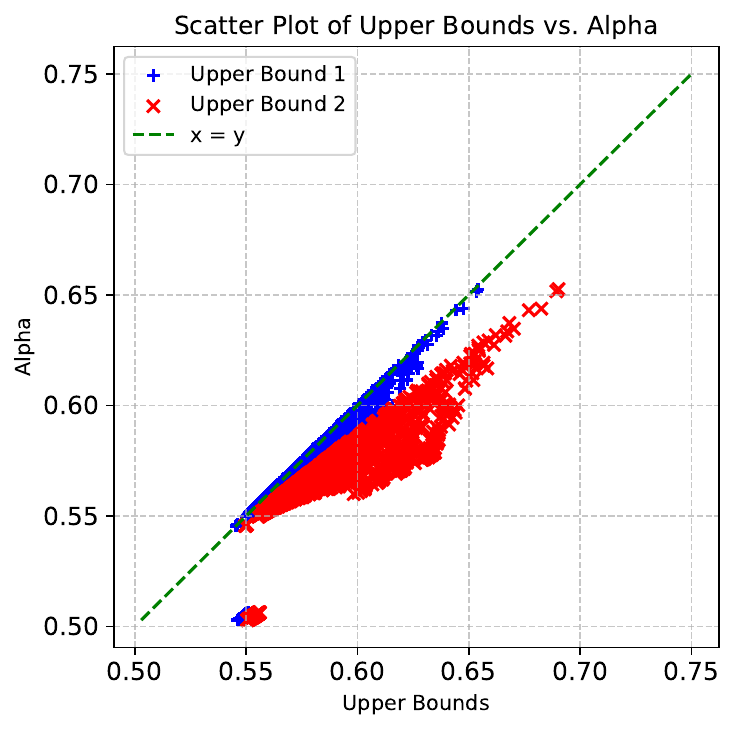}}\\
  \subfloat[][Complete bipartite graphs $K_{n,n}$]{\includegraphics[width=.4\textwidth]{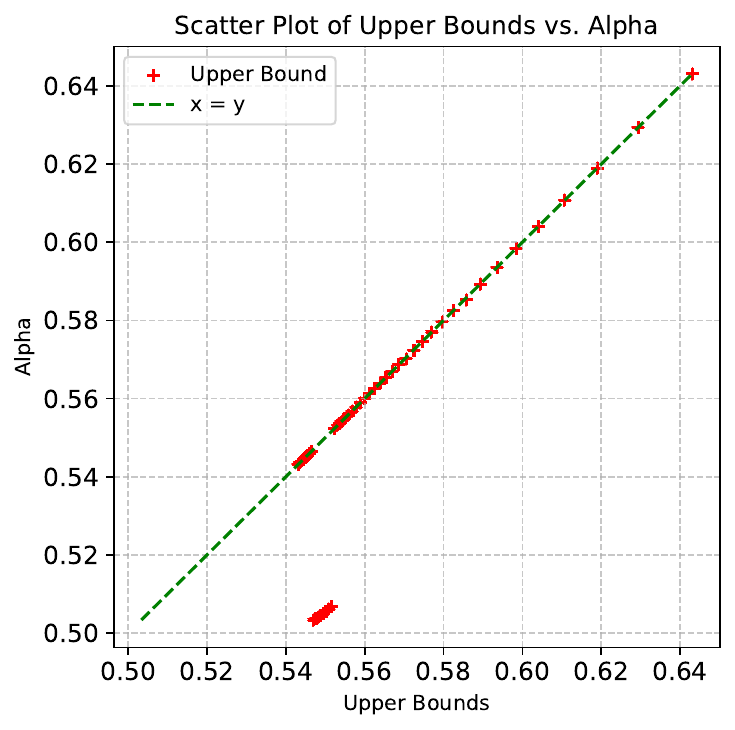}}\quad
  \subfloat[][Complete bipartite graphs $K_{n,m}$]{\includegraphics[width=.4\textwidth]{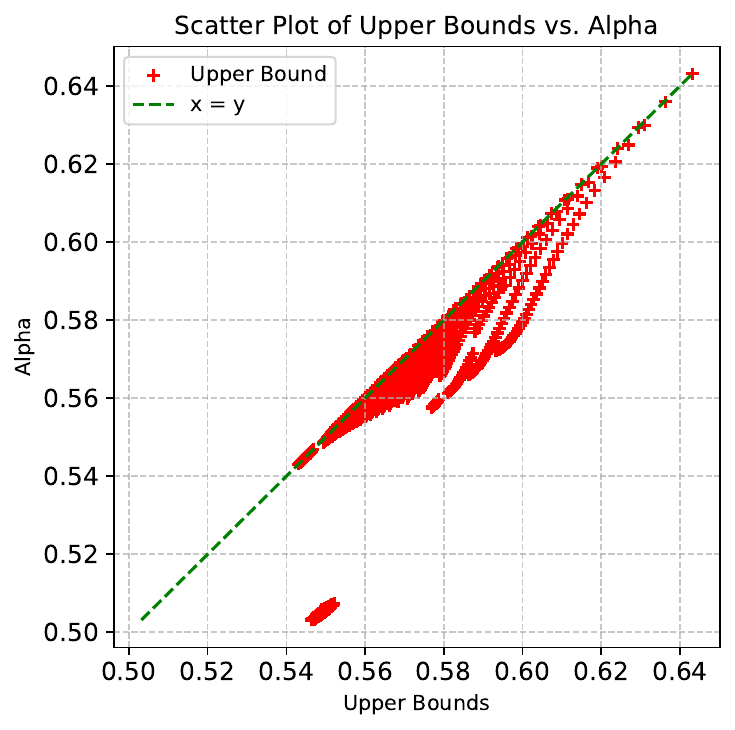}}
\caption{\justifying Plots illustrating the tightness of the bounds in Theorem~\ref{thm:bipar} and Corollary~\ref{cor}. In (a) and (b), bipartite graphs with two disjoint sets of $n,m$ vertices. Here, $n,m$ were randomly selected from $5$ to $51$ for $p=0.6$ and $p=0.9$, respectively, where $p$ represents the edge probability. The level-$1$ QAOA approximation ratio $\alpha_1$ was compared with two upper bounds provided in Theorem~\ref{thm:bipar}. Compared to (a), the graph instances in (b) are denser, and the bounds are tighter. In (c) and (d), the level-$1$ QAOA approximation ratio $\alpha_1$ is compared with the bound provided in Corollary~\ref{cor} for complete bipartite graphs 
$K_{n,n}$ and $K_{n,m}$, respectively. In this case, it is confirmed that the bounds are tight.
The points plotted below the boundary appear to have achieved relatively low approximation ratios compared to the upper bound due to numerical optimization issues.
}
  \label{Fig0}
\end{figure}

In Fig.~\ref{Fig0}, $p$ indicates the edge probability so that the number of edges is $|E|=nmp$. Thus, the graph instances in (b) have more edges compared to (a), that is, they have a larger $d_{ave}$ value, representing density, and the upper bounds become tighter compared to (a) as shown in Fig.~\ref{Fig0}. 
Although the upper bounds we have obtained in Theorem~\ref{thm:bipar} are not tight, they become tighter as a bipartite graph becomes denser which means that $d_{ave}$ increases. In particular, we can get a tight bound for the complete bipartite graphs. Moreover, the intuitive conclusion that can be drawn from the following inequality is that as the size of the complete bipartite graph increases, the performance of the level-1 QAOA deteriorates, approaching that of random guessing.
\begin{Cor}[Complete bipartite graph]
\label{cor}
    Let $K_{n,m}$ be a complete bipartite graph that has two partitioned subsets $V_1$ and $V_2$ of vertices with $|V_1|=n$ and $|V_2|=m$.
    \begin{itemize}
        \item For $n=m \ge 2$,
    $\alpha_1 \le \frac{1}{2}+\frac{1}{2\sqrt{n-1}}\left(1-\frac{1}{n} \right)^{n/2}$.
        \item For $n \neq m \ge 2$, 
        $\alpha_1 \le \frac{1}{2}+\frac{1}{4\sqrt{n-1}}\left(1-\frac{1}{n} \right)^{n/2} + \frac{1}{4\sqrt{m-1}}\left(1-\frac{1}{m} \right)^{m/2} 
        .$
    \end{itemize}
\end{Cor}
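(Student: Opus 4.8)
The plan is to specialize the first (tighter) upper bound of Theorem~\ref{thm:bipar} to the complete bipartite graph $K_{n,m}$, whose degree distribution is completely explicit. The only real work is reading off the degree data and then rewriting the resulting expression into the form stated in the corollary; all the analytic content—the optimization over $\gamma$ and the estimate of $f_d(\gamma)$—has already been discharged in Theorem~\ref{thm:bipar}.

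First I would record the combinatorics of $K_{n,m}$. Every vertex of $V_1$ is joined to all $m$ vertices of $V_2$, hence has degree $m$; symmetrically every vertex of $V_2$ has degree $n$. Thus the only nonempty degree classes are $|D_m| = n$ and $|D_n| = m$, and the total number of edges is $|E| = nm$. When $n \neq m$ these are two distinct classes, whereas when $n = m$ they merge into a single class $|D_n| = 2n$ of vertices all of degree $n$.

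Next I would substitute these values into the first bound of Theorem~\ref{thm:bipar}. For the degree-$m$ class the weight is $\frac{m|D_m|}{2|E|} = \frac{mn}{2nm} = \frac{1}{2}$, and symmetrically the degree-$n$ class contributes weight $\frac{1}{2}$; in the equal case the single merged class carries full weight $1$. Hence for $n \neq m$ the bound becomes
$$
\alpha_1 \le \frac{1}{2} + \frac{1}{4\sqrt{m}}\left(\frac{\sqrt{m-1}}{\sqrt{m}}\right)^{m-1} + \frac{1}{4\sqrt{n}}\left(\frac{\sqrt{n-1}}{\sqrt{n}}\right)^{n-1},
$$
while for $n = m$ the single class gives
$$
\alpha_1 \le \frac{1}{2} + \frac{1}{2\sqrt{n}}\left(\frac{\sqrt{n-1}}{\sqrt{n}}\right)^{n-1}.
$$

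The final step is a purely algebraic rewriting to match the stated form, via the identity
$$
\frac{1}{\sqrt{d}}\left(\frac{\sqrt{d-1}}{\sqrt{d}}\right)^{d-1} = \frac{(d-1)^{(d-1)/2}}{d^{d/2}} = \frac{1}{\sqrt{d-1}}\left(1-\frac{1}{d}\right)^{d/2},
$$
applied with $d = n$ and with $d = m$; this converts both displays above into exactly the two inequalities asserted in the corollary. There is no substantive analytic obstacle: the only points requiring care are correctly matching vertices to degree classes (keeping straight that vertices in the $n$-side have degree $m$ and vice versa) and treating the degenerate $n = m$ case separately, so that the weight is $1$ rather than a $\frac{1}{2} + \frac{1}{2}$ split across two distinct terms.
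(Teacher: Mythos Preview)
Your proposal is correct and follows essentially the same route as the paper: identify the degree classes of $K_{n,m}$, plug into the first bound of Theorem~\ref{thm:bipar} (equivalently, bound each $f_d(\gamma)$ by its maximum $f_d(\gamma^*)$), and rewrite via the identity $\tfrac{1}{\sqrt{d}}\bigl(\tfrac{\sqrt{d-1}}{\sqrt{d}}\bigr)^{d-1}=\tfrac{1}{\sqrt{d-1}}\bigl(1-\tfrac{1}{d}\bigr)^{d/2}$. The only point the paper adds that your route does not is that for $K_{n,n}$ the bound is in fact an \emph{equality}, since with a single degree class the joint maximizer in $\gamma$ coincides with the maximizer of $f_n$; the Corollary as stated only asserts the inequality, so nothing is missing from your argument.
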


\begin{proof}
    For the case of $K_{n,n}$ with $n \ge 2$, all vertices has degree $n$ and so, $|D_n|=2n$. Then $\alpha_1 = \max_{\gamma} \left(\frac{1}{2} + \frac{1}{2}f_n(\gamma)\right)$. Observe that  
    $f_n(\gamma)=2\sin\gamma\cos^{n-1}\gamma$ and $f'_n(\gamma)=2\cos^{n-2}\gamma(n\cos^2\gamma-(n-1))$. Then we can find the optimal $\gamma^*$ such that $\cos\gamma^*=(1-\frac{1}{n})^{1/2}$ and $f_n(\gamma^*) = \frac{2}{\sqrt{n}}\left(\frac{\sqrt{n-1}}{\sqrt{n}}\right)^{n-1}$. Thus, 
    $$\alpha_1 \le \frac{1}{2} + \frac{1}{2}f_n(\gamma^*) = \frac{1}{2} + \frac{1}{2\sqrt{n-1}}\left( 1-\frac{1}{n}\right)^{n/2}.$$

    Similarly, for the case of $K_{n,m}$ with $n \neq m \ge 2$, $n$ vertices have degree $m$ and $m$ vertices have degree $n$, that is, $|D_n|=m$ and $|D_m|=n$. So, $\alpha_1 = \max_{\gamma} \left[\frac{1}{2} + \frac{1}{4}\left( f_n(\gamma) + f_m(\gamma)\right) \right]$ and thus, we get 
    \begin{eqnarray*}
        \alpha_1 &\le& \frac{1}{2} + \frac{1}{4}\left( f_n(\gamma^*) + f_m(\gamma^*)\right) \\
        &=& \frac{1}{2}+\frac{1}{4\sqrt{n-1}}\left(1-\frac{1}{n} \right)^{n/2} + \frac{1}{4\sqrt{m-1}}\left(1-\frac{1}{m} \right)^{m/2}.
    \end{eqnarray*}
\end{proof}

Remark that the upper bounds derived above are based on optimized parameters. This result implies that, even with optimal parameters, the performance of the level-1 QAOA remains fundamentally limited. 
However, the performance of the level-1 RQAOA can be influenced by parameter optimization. In the following section, we numerically demonstrate that while RQAOA outperforms QAOA, it may still fail to find the exact MAX-CUT solution on weighted bipartite graphs. This suggests that the performance of RQAOA could be affected by the parameter optimization inherent in the QAOA subroutine.
To address these challenges, we introduce a modified RQAOA with a parameter-setting strategy and prove that it can exactly solve the problem on a graph with parity-signed weights, even if global optimality is not guaranteed at each QAOA iteration.

\section{Improving performance using RQAOA}

\subsection{RQAOA Performance: Superior to QAOA}
\label{sec:rqaoa}

In this section, we numerically show a separation of the performance between the level-1 RQAOA and the level-1 QAOA in solving the MAX-CUT problem on weighted bipartite graphs with two disjoint sets of vertices, each containing $n$ vertices, denoted by $G_{n,n}^w$, as shown in Fig.~\ref{fig:rqaoa}. 

\begin{figure}[!ht]
  \centering
  \subfloat[][Benchmark Results for $G_{64, 64}^w$]{\includegraphics[width=.48\textwidth]{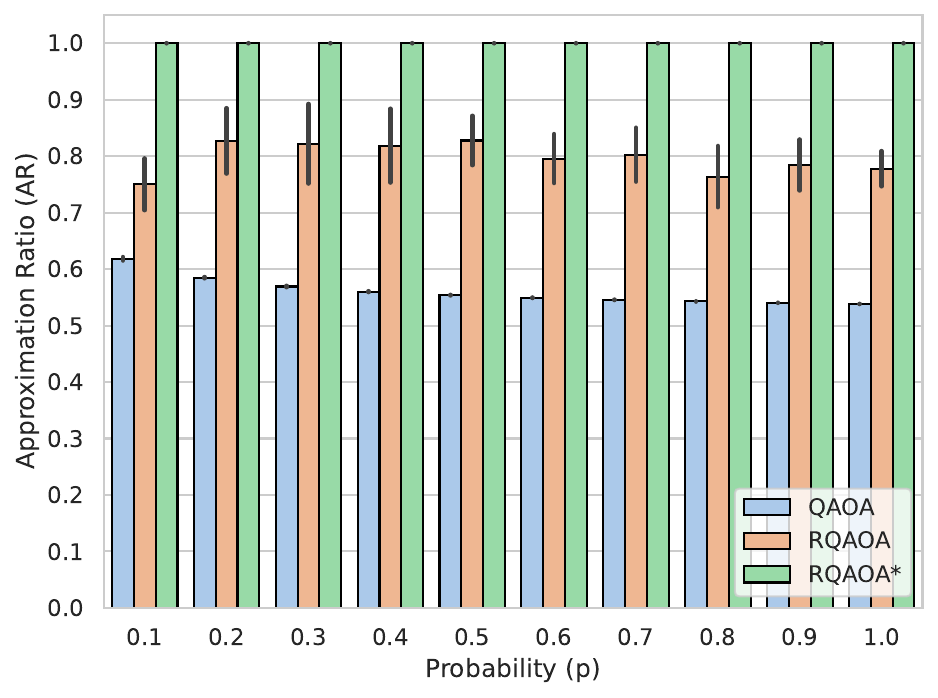}}
  \subfloat[][Benchmark Results for $G_{128, 128}^w$]{\includegraphics[width=.48\textwidth]{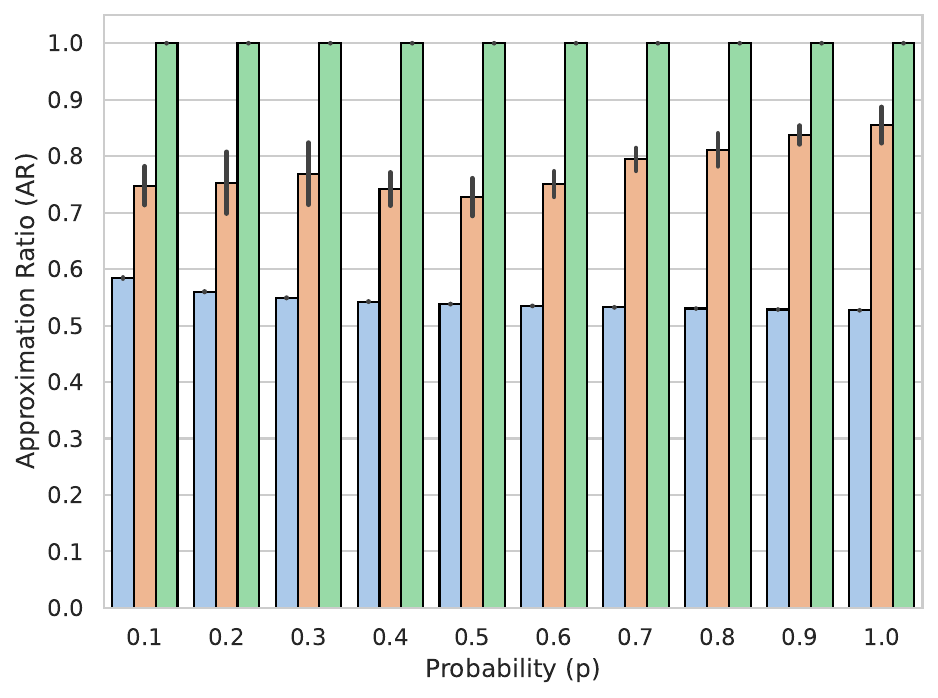}}
\caption{\justifying 
Benchmark results comparing the level-1 QAOA, RQAOA, and RQAOA$^*$ (our modified RQAOA introduced in Algorithm~\ref{alg:modified_rqaoa}) on weighted bipartite graphs $G_{64,64}^w$ and $G_{128,128}^w$ for various probabilities $p$ ranging from 0.1 to 1.0 in increments of 0.1. Here, $p$ represents the edge probability.
Edge weights were sampled from a Gaussian distribution with a mean of 50 and a variance of 25, subsequently converted to integers. Each benchmark was performed on 10 random instances per $p$.
The QAOA algorithm was optimized through a line search on $\gamma \in [0,\pi]$ with 20 samples, followed by gradient descent, with $\beta^*=\pi/8$. 
The RQAOA was optimized using a coarse grid search with $20 \times 20$ points, followed by gradient descent on the best-found point.
The results indicate that RQAOA$^*$ consistently achieves an approximation ratio of 1 across all 200 instances, while RQAOA and QAOA typically yield an approximation ratio of approximately 0.8 and 0.55, respectively, for both graph sizes.}
\label{fig:rqaoa}
\end{figure}

Here, we selected 10 weighted bipartite graphs with 128 and 256 vertices for each edge probability $p$ ranging from 0.1 to 1.0 increments of 0.1, where the edge weights were chosen randomly from a Gaussian distribution with a mean of 50 and a variance of 25, subsequently converted to integers. 
For the level-1 QAOA, we used the analytical form from Ref.~\cite{ODM22} to compute the expectation values for weighted bipartite graphs with a large number of vertices and optimized the parameters $(\beta,\gamma)$ using the line search strategy from Ref.~\cite{VKB+24} on $\gamma \in [0,\pi]$ with $20$ samples, followed by gradient descent, with $\beta^*=\pi/8$.
The level-1 RQAOA was optimized using a coarse grid search~\cite{VKB+24} with $20 \times 20$ points, followed by gradient descent on the best-found point.

As shown in Fig.~\ref{fig:rqaoa}, QAOA achieves an average approximation ratio $\approx 0.55$, whereas RQAOA achieves an average approximation ratio $\approx 0.8$. 
Thus, we can obtain a numerical result showing that RQAOA outperforms QAOA for all instances, but it does not appear that RQAOA can exactly solve the MAX-CUT problem. 
On the other hand, we numerically show that our modified RQAOA introduced in Algorithm~\ref{alg:modified_rqaoa} (denoted as RQAOA$^*$ in the figure) consistently achieves the optimal approximation ratio of 1 for all 200 instances. In the next subsection, we give an analytical proof showing that this is always true. 


Bipartite graphs (both unweighted and weighted) are well-structured so that the exact solution can be easily found. Surprisingly, we can see from our numerical result that both QAOA and RQAOA have limitations in solving the MAX-CUT problem even for instances like (weighted) bipartite graphs. 
 In the following subsection, we propose a modified RQAOA with a novel parameter-setting strategy to settle these issues.

\subsection{Modified RQAOA achieves the approximation ratio of 1}
\label{sec:new_rqaoa}

Normally, if we restrict the region to find the optimal parameters $(\beta^*, \gamma^*)$ to maximize the expectation value $F_q(\mathbf{\beta},\mathbf{\gamma})$ in QAOA, the maximum expectation value from QAOA is expected to decrease. We demonstrate that restricting the optimization domain in the QAOA subroutine, regardless of whether optimal parameters are found in the QAOA subroutine, surprisingly improves the performance of RQAOA in solving the MAX-CUT problem on weighted bipartite graphs.

Even when considering the performance of RQAOA in solving the MAX-CUT problem on unweighted graphs, all reduced graphs in each iteration naturally become weighted graphs. Therefore, to analyze the performance, it is crucial to thoroughly understand the performance of the QAOA subroutine on weighted graphs.

In this section, we first define a (undirected) weighted graph called a \textit{graph with parity-signed weights} 
as follows.  
\begin{Def} 
Let $G=(V,E,w)$ be a weighted graph with a vertex set $V$, an edge set $E$, and a weight function $w:E \rightarrow \mathbb{R}-\{0\}$. Suppose that there is a function $\sigma:V \rightarrow \{0,1\}$. Then $V$ can be partitioned into two subsets $V_{0}=\{v \in V : \sigma(v) = 0\}$ and $V_{1}=\{v \in V : \sigma(v) = 1\}$.
We call $G=(V,E,w,\sigma)$ as a graph with parity-signed weights~\footnote{In graph theory, a parity-signed graph, unlike our case, assigns a positive sign to edges between vertices with the same parity and a negative sign to edges between vertices with opposite parity.} if 
$$
sgn(w(e))=(-1)^{\sigma(i)+\sigma(j)+1}
$$
for all edge $e=(i,j) \in E.$

\end{Def}


\begin{Rem} 
    \begin{enumerate}
        \item The edge set $E$ of a graph with parity-signed weights $G=(V,E,w,\sigma)$ can be written as $E = E_0^- \bigcup E_1^- \bigcup E_2^+$, where $E_0^- \subseteq V_0 \times V_0$, $E_1^- \subseteq V_1 \times V_1$, and $E_2^+ \subseteq V_0 \times V_1$. By the definition of $G$, $w(e_0),w(e_1)<0$ and $w(e_2)>0$ for $e_0 \in E_0^-, e_1 \in E_1^-$ and $e_2 \in E_2^+$.    
        
        \item Every positive weighted bipartite graph can be considered as a graph with parity-signed weights because it is the case when $E_0^- = E_1^- = \emptyset$. 
             
    \end{enumerate}
\end{Rem}

We propose a modified RQAOA equipped with a novel parameter setting method and prove that it can get the exact MAX-CUT solution on a graph with parity-signed weights $G=(V,E,w,\sigma)$. 
\begin{algorithm}[Modified RQAOA$_1$]
\label{alg:modified_rqaoa}
We only modify the first step of RQAOA$_1$ and the rest part of our algorithm is the same as the original RQAOA$_1$.
\begin{enumerate}
    \item Apply the level-$1$ QAOA to find the optimal parameters $({\beta}^*,{\gamma}^*)$ to maximize $F_1(\mathbf{\beta},\mathbf{\gamma})$ in the restricted domain where $0<|\gamma| \le \frac{\pi}{2 w^*}$ and $0<\beta<\frac{\pi}{4}$ with $w^* = \max_{(i,j) \in E} |w(ij)|$.
\end{enumerate}
\end{algorithm}

\begin{Thm}
\label{thm:main}
    Our Algorithm~\ref{alg:modified_rqaoa} can exactly solve the MAX-CUT problem on a graph with parity-signed weights. Hence, it can always get the exact MAX-CUT solution on positive weighted bipartite graphs.
\end{Thm}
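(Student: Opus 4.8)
The plan is to reduce the statement, by induction on the number of vertices, to a single fact about the QAOA$_1$ subroutine: \emph{on the restricted domain every edge correlation $M_{uv}$ carries exactly the sign that the $\sigma$-partition assigns to $x_ux_v$.} If this holds at every recursion level, then every merge performed by RQAOA is ``correct'' and the final classical solve lifts to an optimal cut.

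\textbf{Target and structural invariant.} First I would note that the assignment $x_i=(-1)^{\sigma(i)}$ is optimal for $G=(V,E,w,\sigma)$: it cuts every cross edge ($w>0$) and leaves every within-part edge ($w<0$) uncut, which simultaneously realises the per-edge maximum $\max\{0,w(e)\}$ of $C(\mathbf x)=\tfrac12\sum_e w(e)(1-x_ix_j)$, so $C_{\max}=\sum_{e\in E_2^+}w(e)$. Running RQAOA here amounts to maximising $\langle\mathbf x|H_n|\mathbf x\rangle$ with couplings $J_{ij}=-w(ij)$, so that $\mathrm{sgn}(J_{ij})$ is precisely the value $x_ix_j$ forced by the $\sigma$-partition (positive within a part, negative across). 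A direct check of the elimination step then shows: if the imposed constraint $Z_k=\mathrm{sgn}(M_{kl})Z_l$ is $\sigma$-consistent, i.e. $\mathrm{sgn}(M_{kl})=\mathrm{sgn}(J_{kl})$, the resulting smaller Ising instance is again a graph with parity-signed weights for the inherited labelling $\sigma|_{V\setminus\{k\}}$ — every new or merged coupling $J'_{il}$ still obeys the parity sign rule. Hence, if at every iteration \emph{all} edges satisfy $\mathrm{sgn}(M_{uv})=\mathrm{sgn}(J_{uv})$, then every merge is $\sigma$-consistent, the parity-signed structure persists down to the $n_c$-vertex threshold (where the reduced instance is still parity-signed and is solved exactly), and lifting that solution through the constraints returns a cut of value $C_{\max}$; the last point is a short comparison with the global $\sigma$-partition, which is compatible with all the imposed constraints.

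\textbf{The sign of $M_{uv}$ on the restricted domain.} This is the heart of the matter. I would start from the exact weighted single-layer formula of Ref.~\cite{ODM22}, the natural generalisation of Eq.~(\ref{eq:Cij}): $M_{uv}=\langle Z_uZ_v\rangle$ splits as a ``main'' term proportional to $\sin(4\beta)\sin(\gamma w(uv))\big(\prod_{k\in N(u)\setminus\{v\}}\cos(\gamma w(uk))+\prod_{k\in N(v)\setminus\{u\}}\cos(\gamma w(vk))\big)$ plus a single ``common-neighbour'' term proportional to $\sin^2(2\beta)\big(\prod_{\mathrm{outside}}\cos(\gamma w(\cdot))\big)\big[\prod_m\cos(\gamma(w(um)-w(vm)))-\prod_m\cos(\gamma(w(um)+w(vm)))\big]$, with $m$ ranging over the common neighbours of $u$ and $v$ (here ``weights'' and $w^*$ refer to the current Ising couplings). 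The cutoff $0<|\gamma|\le\frac{\pi}{2w^*}$ forces $|\gamma w(e)|\le\tfrac\pi2$ for every current edge, so every $\cos(\gamma w(e))$ factor — and the whole ``outside'' product — is nonnegative; together with $\sin(4\beta)>0$ and $\sin^2(2\beta)>0$ on $0<\beta<\frac\pi4$, the main term then has a definite sign, equal to $\mathrm{sgn}(J_{uv})$ when $\gamma>0$ (the $\gamma<0$ branch being symmetric). For the bracket I would use the parity rule: if $(u,v)$ lies inside a part then $w(um)$ and $w(vm)$ have the \emph{same} sign for every common neighbour $m$, so $|\gamma(w(um)-w(vm))|\le\tfrac\pi2$ and a factor-by-factor comparison of $\cos$ on $[0,\tfrac\pi2]$ gives $\prod_m\cos(\gamma(w(um)-w(vm)))\ge\prod_m\cos(\gamma(w(um)+w(vm)))$, i.e. the bracket is $\ge0$; if $(u,v)$ is a cross edge the two weights have opposite signs and the mirror comparison gives a bracket $\le0$. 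In both cases the common-neighbour term reinforces the main term, so $\mathrm{sgn}(M_{uv})=\mathrm{sgn}(J_{uv})$, which closes the induction (the boundary $|\gamma|=\frac{\pi}{2w^*}$, where some $\cos$ factors vanish, is handled separately or by restricting the analysis to the open domain).

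\textbf{Where the difficulty lies.} The delicate step is the common-neighbour/triangle contribution: it is \emph{not} automatically dominated by the main term — over an unrestricted $\gamma$-range it can be comparable in size and, for a general graph, of the \emph{opposite} sign, which is precisely why ordinary RQAOA can fail. The proof must exploit both ingredients simultaneously: the precise cutoff $\frac{\pi}{2w^*}$, which keeps every $\cos$ factor nonnegative, and the parity identity $\mathrm{sgn}\big(w(uv)w(um)w(vm)\big)=-1$ that holds for every triangle of a graph with parity-signed weights, which is exactly what forces the triangle term to reinforce rather than cancel the main term. For a positive weighted bipartite graph the very first reduction step is triangle-free, so this subtlety is invisible there; but reduced instances generically acquire within-part edges and triangles, which is why the theorem must be proved for the full class of graphs with parity-signed weights.
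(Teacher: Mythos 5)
Your proposal is correct and takes essentially the same approach as the paper: your structural invariant is the paper's Lemma~\ref{lem:cost_preserv} (a $\sigma$-consistent merge preserves both the parity-signed weight structure and the optimal value), your sign analysis of $M_{uv}$ on the restricted domain is the paper's Lemma~\ref{lem:our1} (same weighted QAOA$_1$ formula, same cutoff keeping the cosine factors nonnegative, same use of the triangle parity rule to make the common-neighbour term reinforce the main term), and combining the two by induction is exactly the paper's proof of Theorem~\ref{thm:main}. The only cosmetic differences are your Ising-coupling framing $J=-w$ and your lifting argument via compatibility of all imposed constraints with the $\sigma$-partition, in place of the paper's explicit bookkeeping of eliminated and created edge weights.
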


We provide the proof of Theorem~\ref{thm:main} in the following section.

\subsubsection{Proof of Theorem~\ref{thm:main}}

In this section, we prove that applying our parameter setting method can exactly solve the MAX-CUT problem on a graph with parity-signed weights in the following steps.
Lemma~\ref{lem:cost_preserv} states that the MAX-CUT cost function value and the weight structure of $G$ are preserved after identifying two connected vertices with the same or opposite sign when they belong to the same or different parts of the vertex set, respectively.
Lemma~\ref{lem:our1} yields that our modified RQAOA assigns edge correlation as described in Lemma~\ref{lem:cost_preserv}.
Consequently, combining Lemma~\ref{lem:cost_preserv} and Lemma~\ref{lem:our1}, we can prove that our modified RQAOA can always find the exact MAX-CUT solution for a graph with parity-signed weights.

\begin{Lem}
\label{lem:cost_preserv}
If we eliminate an edge $e$ by identifying the vertices of $e$ such that the variables of the vertices with the same or opposite sign when $e \in E_0^- \cup E_1^-$ or $e \in E_2^+$, respectively, the value of the MAX-CUT cost function on ${G=(V,E,w,\sigma)}$ and the weight structure of reduced graphs remain preserved. 
\end{Lem}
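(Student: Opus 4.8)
The plan is to track what happens to the MAX-CUT cost function and to the parity-sign structure under a single edge-contraction step, and then argue that the claimed identification rule is exactly the one that keeps both invariant. Concretely, I would set up the weighted MAX-CUT cost as $C(\mathbf{x})=\frac{1}{2}\sum_{(i,j)\in E} w(ij)\,(1-x_ix_j)$ (equivalently work with the Ising form $\sum w(ij) x_i x_j$, which differs only by the constant $\frac{1}{2}\sum_e w(e)$), and recall from the Remark that $E=E_0^-\sqcup E_1^-\sqcup E_2^+$ with $w<0$ on the first two pieces and $w>0$ on $E_2^+$. For an optimal (maximum-cut) assignment $\mathbf{x}^*$, the key elementary observation is that a negative-weight edge $(i,j)$ ``wants'' $x_i^*=x_j^*$ (it contributes $0$ to the penalty-free part, or equivalently maximizes $-w(ij)x_ix_j$), while a positive-weight edge ``wants'' $x_i^*=-x_j^*$. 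The first step is therefore to show that for a graph with parity-signed weights this local preference is globally consistent: the assignment $x_v=(-1)^{\sigma(v)}$ (i.e.\ $V_0$ on one side, $V_1$ on the other) simultaneously satisfies \emph{every} edge, so $C_{\max}=\frac{1}{2}\sum_{e\in E_2^+}w(e)-\frac{1}{2}\sum_{e\in E_0^-\cup E_1^-}w(e)\cdot(\text{sign factors})$, and in particular the maximum cut value equals $\sum_{(i,j)} |w(ij)|$ up to the universal constant — this is the analogue of ``for bipartite graphs the optimal cut is all edges.''

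Next I would carry out the contraction itself. Suppose we contract $e=(k,l)$. If $e\in E_2^+$ we impose $Z_k=-Z_l$ (opposite signs); if $e\in E_0^-\cup E_1^-$ we impose $Z_k=Z_l$ (same signs). Substituting this constraint into $H_n=\sum_{(i,j)}J_{ij}Z_iZ_j$ exactly as in Step~4--5 of Algorithm~\ref{alg:rqaoa} produces $H_{n-1}$ with rescaled couplings $J'_{il}=\pm J_{ik}$ for edges formerly incident to $k$. I then need to check two things. (a) \emph{Value preservation}: since the imposed constraint is satisfied by the global optimum $x_v=(-1)^{\sigma(v)}$ (because $x_k^*=(-1)^{\sigma(k)}$ and $x_l^*=(-1)^{\sigma(l)}$ differ exactly when $(k,l)\in E_2^+$, i.e.\ when $\sigma(k)\ne\sigma(l)$, matching the rule), the restricted maximum of $H_{n-1}$ equals the unrestricted maximum of $H_n$ minus the fixed contribution $J_{kl}Z_kZ_l$ evaluated on the constraint, which is a constant; hence $C_{\max}$ is preserved (the RQAOA bookkeeping already accounts for the removed term). (b) \emph{Structure preservation}: define $\sigma$ on the contracted graph by giving the merged vertex the label $\sigma(k)$ if the constraint was $Z_k=Z_l$ (which forces $\sigma(k)=\sigma(l)$ since $(k,l)\in E_0^-\cup E_1^-$ means $k,l$ in the same part) and, when $(k,l)\in E_2^+$, noting $\sigma(k)\ne\sigma(l)$ and the constraint $Z_k=-Z_l$ is consistent with \emph{either} choice after relabeling one endpoint. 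The content is then the verification that for every surviving edge, $\mathrm{sgn}(J'_{ij})=(-1)^{\sigma'(i)+\sigma'(j)+1}$ still holds: an edge $(i,k)$ with $i\in V_{a}$, $k\in V_b$ becomes $(i,l)$ with coupling multiplied by $\mathrm{sgn}(M_{kl})=\mathrm{sgn}(J_{kl})$, and one checks case-by-case ($a,b$ ranging over $\{0,1\}$, and $(k,l)$ in $E_2^+$ vs.\ $E_0^-\cup E_1^-$) that the sign flip is precisely what is needed to match the new parities of $i$ and $l$. If multi-edges between two vertices appear after contraction, they must be merged by summing weights, and one should note they are merged with \emph{consistent} signs so the parity-sign property is not spoiled.

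I expect the main obstacle to be part (b) in the annoying case where the contraction creates a self-loop or, more seriously, a multi-edge whose two parallel contributions could in principle have opposite signs — I would need to argue (using the parity-sign property of the \emph{original} graph: two parallel paths $i\!-\!k$ and $i\!-\!l$ after contracting $(k,l)$) that they always add with the same sign, so the merged weight is nonzero with the correct sign. A clean way to handle all of this uniformly is to phrase the whole argument in terms of the ``gauge'' transformation $Z_v\mapsto (-1)^{\sigma(v)}Z_v$: a graph has parity-signed weights iff this gauge makes all couplings negative (a ferromagnet), contraction commutes with the gauge, and a ferromagnet is trivially preserved under contracting along any edge with the same-sign rule — so the entire lemma reduces to the observation that the modified RQAOA's contraction rule, read in the gauged frame, is always ``identify with the same sign.'' I would present the direct case-analysis as the main proof and mention the gauge viewpoint as the conceptual reason it works, and the self-loop/parallel-edge consistency is the one spot that genuinely needs the hypothesis rather than being bookkeeping.
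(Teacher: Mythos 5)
Your proposal is correct, and it reaches the lemma by a partly different route than the paper. The paper proves value preservation by direct substitution: it writes out $C^w_{n,m}(\mathbf{x})$, imposes $x_{i'}=x_i$ (resp.\ $x_i=-x_j$), and cancels the constant terms explicitly, obtaining $C^w_{n,m}=C^w_{n-1,m}$ in the negative-edge case and $C^w_{n,m}=\sum_{l}w(il)+\sum_{k}w(ki)+C^w_{n-1,m}$ in the positive-edge case, with the offset interpreted as exactly the net change of the positive-weight sum; the merged weights $w(ki)+w(ki')$, $w(il)+w(i'l)$, $w(kj)-w(ki)$, $-w(il)+w(jl)$ are then checked to have the correct signs --- precisely the parallel-edge sign consistency you single out as the crux, so your part (b) coincides with the paper's structure-preservation check. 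Where you differ is part (a): you first note that the parity assignment $x_v=(-1)^{\sigma(v)}$ optimizes every edge term simultaneously, so the contraction rule is consistent with a global optimum and the restricted maximum equals the unrestricted one; this replaces the paper's explicit cancellation by a short optimality argument, and your gauge transformation $Z_v\mapsto(-1)^{\sigma(v)}Z_v$ supplies the conceptual reason the sign bookkeeping always works out. What the paper's computation buys is the explicit Case-2 offset, which is reused verbatim in the proof of Theorem~\ref{thm:main} (the $E^{\mathrm{elm}}/E^{\mathrm{cre}}$ accounting); what your argument buys is brevity and the fact that, once structure preservation is established, optimality of the whole recursion follows almost immediately by induction. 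Two minor points: in the $E_2^+$ case the merged vertex simply keeps the label $\sigma(l)$ of the surviving endpoint (no relabeling ambiguity arises), and since $G$ is simple no self-loops can occur, so only the parallel-edge merging genuinely needs the parity hypothesis, exactly as you note.
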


We can intuitively see that the weight structure of a graph with parity-signed weights is preserved after identifying two connected vertices for both cases as described in Fig.~\ref{fig:reduction}. Moreover, we can also figure out that the maximum value of the MAX-CUT cost function $C^w(\bold{x})$ which is defined in Eq.~\ref{eq:w_maxcut}, does not change. Note that the maximum value of $C^w(\bold{x})$ on $K_{4,3}^w$ is the sum of all positive weights, that is, the sum of all red edge weights. In Case 1, the positive weights of the removed edges are maintained by merging with the edge weights of the red bold lines. Thus, the maximum value of $C^w(\bold{x})$ remains to be unchanged. See Appendix~\ref{appendix:pf_lem} for the detailed proof.

\begin{figure}

	\centering
        \subfloat[][Case 1: Identify two vertices $i$ and $i'$ in $V_0$ with the same sign] 
       {\includegraphics[width=0.85\linewidth]{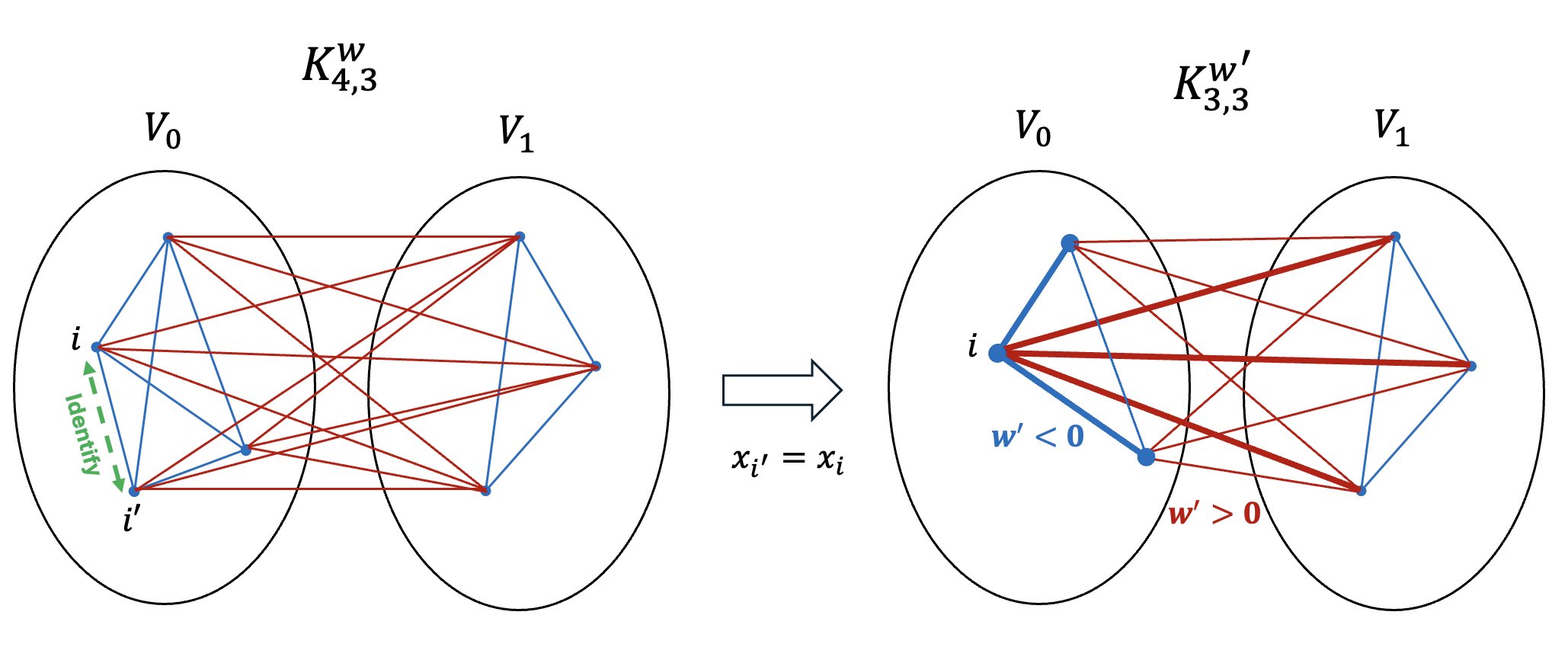}}
        \\
        \subfloat[][Case 2: Identify a vertex $i$ in $V_0$ and a vertex $j$ in $V_1$ with opposite signs] 
        {\includegraphics[width=0.85\linewidth]{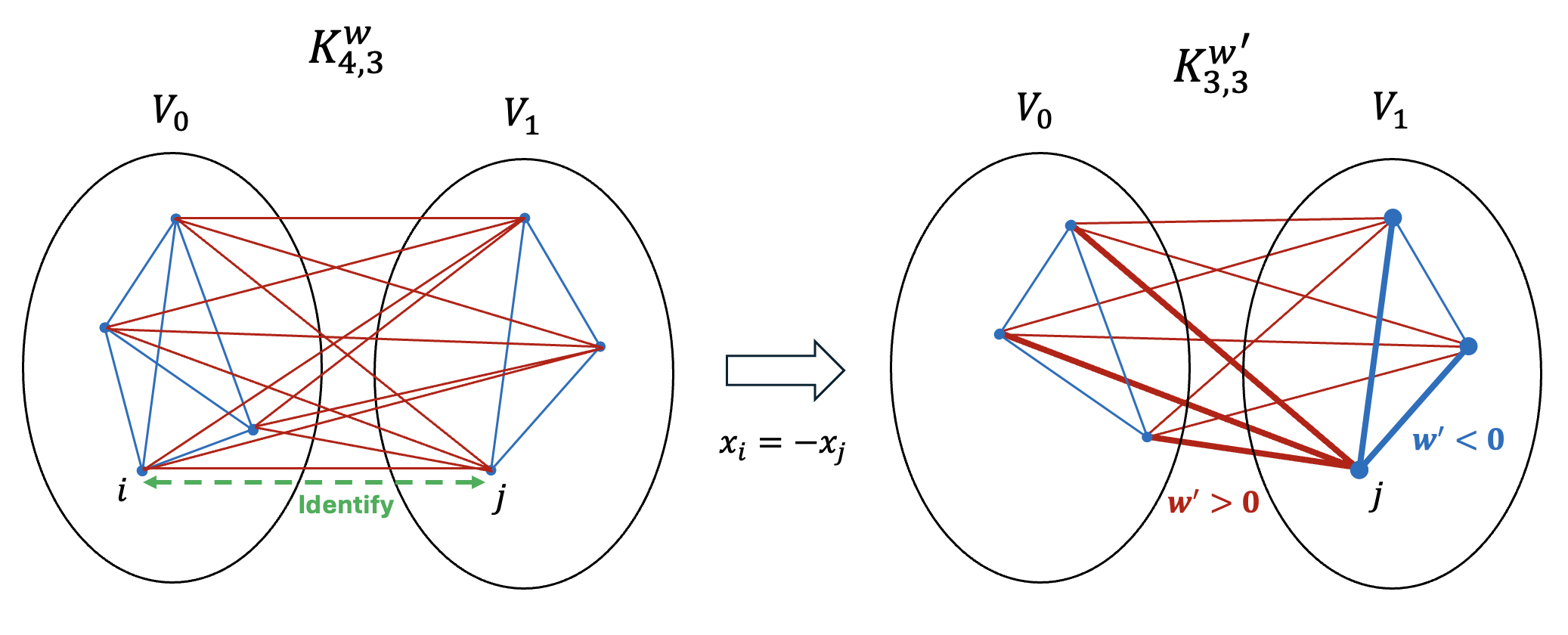}}
\caption{\justifying A schematic diagram showing how the MAX-CUT cost function on $K^w_{4,3}$ and the weight structure of the graph change after one iteration of our modified RQAOA for the case when we identify (a) two connected vertices in the same partition with the same sign and (b) in the different partitions with different signs; 
The red/blue edges indicate the edges with the positive/negative weights, respectively. The bold lines represent the edges whose weights changed after identifying two vertices. After one iteration, the reduced graph remains to have the same weight structure as shown in the right graph for both cases.}
   \label{fig:reduction}
\end{figure}

\begin{Lem}
\label{lem:our1}
     If $0 <\left|\gamma^*\right| \le \frac{\pi}{2 w^*}$ with $w^* = \max_{(i,j) \in E} |w(ij)|$ and $0<\beta^*<\frac{\pi}{4}$  in the QAOA$_1$ subroutine for solving the MAX-CUT problem on a graph with parity-signed weights $G=(V,E,w,\sigma)$, the edge expectation values are always positive or negative when the edges belong to $E_0^- \cup E_1^-$ or $E_2^+$, respectively.
\end{Lem}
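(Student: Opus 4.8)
The plan is to read the claim off the closed form for the level-$1$ QAOA edge expectation value $M_{uv}=\langle\psi_1(\beta,\gamma)|Z_uZ_v|\psi_1(\beta,\gamma)\rangle$ on a weighted graph. Fix an edge $(u,v)$, write $N(u)$ for the neighbourhood of $u$ and $F=N(u)\cap N(v)$ for the common neighbours (equivalently, the triangles on $(u,v)$), and set
\[
P_u=\!\!\prod_{k\in N(u)\setminus\{v\}}\!\!\cos(\gamma w_{uk}),\qquad Q_u=\!\!\!\prod_{k\in N(u)\setminus(F\cup\{v\})}\!\!\!\cos(\gamma w_{uk}),\qquad R_{\pm}=\prod_{k\in F}\cos\bigl(\gamma(w_{uk}\pm w_{vk})\bigr),
\]
with $P_v,Q_v$ defined symmetrically. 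The closed form of Ref.~\cite{ODM22} (cf.\ Eq.~(\ref{eq:Cij}), Ref.~\cite{WHJR18}) then has the structure
\[
M_{uv}=\underbrace{-\tfrac12\sin(4\beta)\sin(\gamma w_{uv})\,(P_u+P_v)}_{=:\,T_1}\;\underbrace{-\,\tfrac12\sin^2(2\beta)\,Q_uQ_v\,(R_+-R_-)}_{=:\,T_2}.
\]
First I would record the elementary sign data forced by the restricted box: $0<\beta^*<\tfrac\pi4$ gives $\sin(4\beta^*)>0$ and $\sin^2(2\beta^*)>0$; and $0<|\gamma^*|\le\tfrac{\pi}{2w^*}$ with $|w_{ab}|\le w^*$ for every edge gives $|\gamma^*w_{ab}|\le\tfrac\pi2$, so every short-range cosine $\cos(\gamma^*w_{ab})$ is nonnegative — hence $P_u,P_v,Q_u,Q_v\ge 0$ — while $\sin(\gamma^*w_{uv})$ carries the sign of $\gamma^*w_{uv}$.

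Next I would reduce to $\gamma^*>0$. Since $P_u,Q_u,R_\pm$ are even in $\gamma$ and $\sin(\gamma w_{uv})$ is odd, $M_{uv}(-\gamma)-M_{uv}(\gamma)=-2T_1(\gamma)$; summing $w_{uv}\bigl(M_{uv}(-\gamma)-M_{uv}(\gamma)\bigr)$ over the edges and using $w_{uv}\sin(\gamma w_{uv})\ge 0$ for $\gamma>0$ (valid because $|\gamma w_{uv}|\le\tfrac\pi2$) shows $F_1(\beta,\gamma)\ge F_1(\beta,-\gamma)$ for every $\gamma\in(0,\tfrac{\pi}{2w^*}]$, so the restricted maximiser may be taken with $\gamma^*>0$. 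Then $\mathrm{sgn}\,\sin(\gamma^*w_{uv})=\mathrm{sgn}(w_{uv})$, and since $P_u+P_v\ge 0$ we obtain $\mathrm{sgn}(T_1)=-\mathrm{sgn}(w_{uv})$: $T_1\le 0$ on $E_2^+$ and $T_1\ge 0$ on $E_0^-\cup E_1^-$, with strict inequality unless some short-range cosine vanishes — which forces $|\gamma^*|=\tfrac{\pi}{2w^*}$ together with $u,v$ both incident to a maximum-weight edge, a boundary configuration removable by taking the box with strict upper bound (or by a genericity remark).

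The real content is showing that $T_2$ reinforces, rather than opposes, this sign, i.e.\ that $\mathrm{sgn}(R_+-R_-)=\mathrm{sgn}(w_{uv})$. The structural input is the parity identity $\mathrm{sgn}(w_{uk})\,\mathrm{sgn}(w_{vk})=(-1)^{\sigma(u)+\sigma(v)}=-\mathrm{sgn}(w_{uv})$ for every common neighbour $k\in F$: $w_{uk}$ and $w_{vk}$ have opposite signs when $(u,v)\in E_2^+$ and equal signs when $(u,v)\in E_0^-\cup E_1^-$. Take $(u,v)\in E_2^+$; then $|w_{uk}+w_{vk}|=\bigl||w_{uk}|-|w_{vk}|\bigr|\le w^*$, and from $|a+b|+|a-b|=2\max(|a|,|b|)$ also $|w_{uk}+w_{vk}|+|w_{uk}-w_{vk}|\le 2w^*$. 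With $0<\gamma^*\le\tfrac{\pi}{2w^*}$ this forces $|\gamma^*(w_{uk}+w_{vk})|\le\tfrac\pi2$, $|\gamma^*(w_{uk}+w_{vk})|\le|\gamma^*(w_{uk}-w_{vk})|$, and $|\gamma^*(w_{uk}+w_{vk})|+|\gamma^*(w_{uk}-w_{vk})|\le\pi$, which together yield the pointwise bound $\cos(\gamma^*(w_{uk}+w_{vk}))\ge\bigl|\cos(\gamma^*(w_{uk}-w_{vk}))\bigr|$ (the elementary fact that $\cos x\ge|\cos y|$ whenever $0\le x\le\tfrac\pi2$, $x\le|y|$ and $x+|y|\le\pi$). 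Taking the product over $k\in F$, with all left-hand factors nonnegative, gives $R_+\ge R_-$, hence $T_2\le 0$, so $M_{uv}=T_1+T_2<0$. For $(u,v)\in E_0^-\cup E_1^-$ the signs of $w_{uk}$ and $w_{vk}$ agree, the roles of the "$+$" and "$-$" combinations swap, and the same estimate yields $R_-\ge R_+$, hence $T_2\ge 0$ and $M_{uv}>0$. When $F=\varnothing$ one simply has $T_2=0$ and $M_{uv}=T_1$, already of the claimed sign; this is the positive-weighted-bipartite base case, which then feeds the recursion via Lemma~\ref{lem:cost_preserv}.

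The step I expect to be the main obstacle is controlling $T_2$: one must first extract the precise combinatorics of the triangle term of the weighted closed form — that it is built from $\cos(\gamma(w_{uk}\pm w_{vk}))$ over the common neighbours — and then establish the cosine inequality above. That inequality is exactly where the threshold $|\gamma^*|\le\tfrac{\pi}{2w^*}$ is essential: it is precisely what makes the two angles sum to at most $\pi$, and a looser cap on $|\gamma^*|$ would let $\cos(\gamma^*(w_{uk}-w_{vk}))$ overwhelm $\cos(\gamma^*(w_{uk}+w_{vk}))$ and flip the sign of $T_2$. The one remaining loose end — the degenerate vanishing of cosine products on the boundary of the box — is a routine genericity matter.
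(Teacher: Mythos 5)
Your proposal is correct and follows essentially the same route as the paper's proof: it uses the closed-form weighted QAOA$_1$ edge expectation, observes that the restricted domain $0<|\gamma|\le\frac{\pi}{2w^*}$ makes all single-edge cosine products nonnegative, and fixes the sign of the triangle term from the parity-forced sign pattern of the common-neighbour weights together with the monotonicity of cosine, so that both terms reinforce the claimed sign. Your extra steps --- the reduction to $\gamma^*>0$ via the odd/even symmetry of the two terms, and the per-factor bound $\cos x\ge|\cos y|$ covering angles beyond $\tfrac{\pi}{2}$, plus the remark on boundary degeneracies --- merely make explicit points the paper's appendix treats implicitly; the underlying argument is the same.
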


We also put the detailed proof of Lemma~\ref{lem:our1} in Appendix~\ref{appendix:pf_lem:our1}. Now, we can prove Theorem~\ref{thm:main} as follows.



\begin{proof}[Proof of Theorem~\ref{thm:main}]
    The weighted MAX-CUT cost function on ${G=(V,E,w,\sigma)}$, where $V=V_0 \cup V_1$ with $|V_0|=n$ and $|V_1|=m$ and $E=E_0^- \cup E_1^- \cup E_2^+$, can be written as 
    \begin{eqnarray*}
     C_{n,m}^w(\bold{x}) &=& \frac{1}{2} \left(
     \sum_{e_2 \in E_2^+}  w(e_2)^+ 
     + \sum_{e_0 \in E_0^-}  w(e_0)^- 
     + \sum_{e_1 \in E_1^-}  w(e_1)^-
     - \sum_{e_2 \in E_2^+}  w(e_2)^+x_{e_2} 
     - \sum_{e_0 \in E_0^-}  w(e_0)^-x_{e_0} 
     - \sum_{e_1 \in E_1^-}  w(e_1)^-x_{e_1}
     \right), 
 \end{eqnarray*}
 where $w(e)^+$ and $w(e)^-$ indicate the 
 positivity and negativity of the edge weight of the edge $e$, respectively.
    It follows from Lemma~\ref{lem:cost_preserv} and Lemma~\ref{lem:our1} that we can exactly calculate the new MAX-CUT cost function after $k$ iterations. 
    Assume that there are $l$ eliminated edges with the correlated end vertices and $k-l$ eliminated edges with the anti-correlated end vertices during $k$ iterations. Let $\mathcal{X}$ be the subset of $\{-1,1\}^{n+m}$ satisfying those $k$ constraints. Then the maximum value of the new MAX-CUT cost function we can obtain by imposing the $k$ constraints is
    \begin{eqnarray*}
     \max_{x \in \mathcal{X}} C_{n,m}^w(\bold{x}) 
     &=& \sum_{e \in E^{\text{elm}}} w(e) - \sum_{e \in E^{\text{cre}}} w(e) + \max_{\bold{x}'} C_{n-k_1,m-k_2}(\bold{x}') \\
     &=& \sum_{e \in E^{\text{elm}}} w(e) - \sum_{e \in E^{\text{cre}}} w(e) + \left(\sum_{e_2 \in E_2^+} w(e_2)^+ - \sum_{e \in E^{\text{elm}}} w(e) +\sum_{e \in E^{\text{cre}}} w(e) \right) \\
     &=& \max_{\bold{x}\in\{-1,1\}^{n+m}}C_{n,m}^w(\bold{x}),
 \end{eqnarray*}
 where $E^{\text{elm}}$ and $E^{\text{cre}}$ are the set of all eliminated and created edges in the edge set $E_2^+$ during $k$ iterations, respectively, and $k=k_1+k_2$. Thus, by finding the exact MAX-CUT solution of the final reduced graph ${G'=(V',E',w',\sigma)}$, where $V'=V'_0 \cup V'_1$ with $|V'_0|=n-k_1$ and $|V'_1|=m-k_2$, the MAX-CUT cost function value of the initial graph will recover perfectly using the $k$ constraints. This completes the proof.
\end{proof}

\section{Conclusion}
\label{sec:conclusion}

 We have shown
 the upper bound of the approximation ratio of the level-1 QAOA in solving the MAX-CUT problem on bipartite graphs.
 Our findings indicate that the performance of the level-1 QAOA
 generally deteriorates as a bipartite graph becomes denser and as the size of a complete bipartite graph grows.
 
 In addition, we have numerically shown that although RQAOA outperforms QAOA at the same level in solving the MAX-CUT problem on weighted bipartite graphs, there is still room for further performance improvement. To get a better performance of RQAOA, we have proposed a modified RQAOA with a parameter setting strategy and have rigorously proved that it can exactly solve the MAX-CUT problem on a graph with parity-signed weights which is a generalization of a positive weighted bipartite graph.

 The level-1 QAOA itself has limitations in parameter optimization. Since our modified RQAOA even restricts the optimization region in the QAOA subroutine, it may not always yield true optimal parameters at each iteration. Nevertheless, our algorithm demonstrates better performance than the original RQAOA in solving the MAX-CUT problem on positive weighted bipartite graphs. What aspects of our modified RQAOA lead to its performance improvement? Unfortunately, we have not yet obtained a complete answer to this question from this work. Hence, rigorously analyzing it would be an important area for future research.
 


We have proved that our strategy restricting the optimization region yields the optimal performance of RQAOA in solving the MAX-CUT problem for graphs with parity-signed weights.
Our results support that good performance of RQAOA does not require good optimization of QAOA, although refining the grid search area more finely in the classical optimization of the QAOA subroutine to obtain a better optimal parameter is expected to improve RQAOA’s performance. 
In addition, in the case of well-structured graphs like parity-signed weighted graphs, preserving the graph structure across iterations appears to enhance the RQAOA performance.
Therefore, for future works, it would be interesting to study if our strategy could fit into more general instances such as triangle-free graphs or triangle-free $d$-regular graphs since almost all triangle-free graphs are bipartite~\cite{PSS02}.


\begin{acknowledgements}
This research was supported by  the National Research Foundation of Korea (NRF) of Korea grant funded by the Ministry of Science and ICT (MSIT) (Grant
No. NRF-2022R1C1C2006396). S.L. acknowledges support from the NRF grants funded by the MSIT (No. NRF-2022M3K2A108385913 and No. RS-2024-00432214), and Creation of the Quantum Information Science R\&D Ecosystem (No. 2022M3H3A106307411) through the NRF funded by the MSIT.
H.K. is supported by the KIAS Individual Grant No. CG085301 at Korea Institute for Advanced Study. 
E.B. and H.K. acknowledge support from the NRF of Korea (Grants No. 2023M3K5A109480511 and No. 2023M3K5A1094813). V.V. is supported by the A*STAR Start Up Fund (KIMR220701aIMRSEF) and Q.InC Strategic Research and Translational Thrust.
\end{acknowledgements}

\begin{appendices}

\section{Appendix: Proof of Lemma~\ref{lem:cost_preserv}}
\label{appendix:pf_lem}
In this section, we provide the detailed calculations in the proof of Lemma~\ref{lem:cost_preserv} which states that the reduced graph remains to preserve the weight structure of a graph with parity-signed weights. Moreover, the MAX-CUT cost function will not change after identifying two connected vertices with the same or opposite sign when they have the same or opposite parity, respectively.
The MAX-CUT cost function $C^w(x)$ on a weighted graph $G=(V,E,w)$ can be formulated as
\begin{equation}
\label{eq:w_maxcut}
    C^w(\bold{x}) = \frac{1}{2} \sum_{e \in E} w(e)(1 - x_e), 
\end{equation}
where $x_e = x_i x_j$ for $e=(i,j) \in E$ with $x_i \in \{-1,1\}$ for all $i \in V$, and $w(e)$ is the edge weight of the edge $e$. 

 Let us denote the cost function of the weighted MAX-CUT problem over a graph with parity-signed weights $G_{n,m}=(V_0 \cup V_1,E_0^- \cup E_1^- \cup E_2^+, w, \sigma)$ with $|V_0|=n$ and $|V_1|=m$ by
 \begin{eqnarray}
 \label{eq:w_maxcut_n,m}
     C_{n,m}^w(\bold{x}) &=&
     \frac{1}{2} \left(
     \sum_{k \in V_0} \sum_{l \in V_1} w(kl)
     + \sum_{k \neq k'\in V_0}  w(kk')
     + \sum_{l \neq l' \in V_1}  w(ll') \nonumber \right. \\ 
     &&
     \left.-\sum_{k \in V_0} \sum_{l \in V_1} w(kl)x_k x_l 
     - \sum_{k \neq k'\in V_0}  w(kk') x_{k}x_{k'} 
     - \sum_{l \neq l' \in V_1}  w(ll')  x_{l}x_{l'} \right),
 \end{eqnarray}
 where $w(kl)$ denotes the edge weight of the edge $(k,l)$.

    Case~1) Assume that we pick $e^* \in E_0^- \cup E_1^-$. Without loss of generality, let $e^* = (i,i') \in E_0^-$. 
 We can rewrite the cost function as
 \begin{eqnarray}
 \label{eq:w_maxcut_n,m_exp}
     C_{n,m}^w(\bold{x}) &=& 
     \frac{1}{2} \left(
     \sum_{k \in V_0} \sum_{l \in V_1} w(kl) 
     + \sum_{k \neq k'\in V_0}  w(kk') 
     + \sum_{l \neq l' \in V_1}  w(ll') 
     - \sum_{k \in V_0-\{i'\}} \sum_{l \in V_1} w(kl)x_k x_l \nonumber \right.\\ 
     && \left.
     - \sum_{l \in V_1} w(i'l)x_{i'} x_l 
     - \sum_{k \neq k' \in V_0-\{i'\}}  w(kk') x_{k}x_{k'} 
     - \sum_{k \in V_0-\{i'\}}  w(ki') x_{k}x_{i'} 
     - \sum_{l \neq l' \in V_1}  w(ll')  x_{l}x_{l'} \right).
 \end{eqnarray}
    We impose the constraint $x_{i'}=x_{i}$ on the cost function $C_{n,m}^w(\bold{x})$ to get the new cost function
    
 \begin{eqnarray}
 \label{eq:w_maxcut_n,m_expension}
     C_{n,m}^w(\bold{x}) 
     &=& 
     \frac{1}{2} \left(
     \sum_{k \in V_0} \sum_{l \in V_1} w(kl) 
     + \sum_{k \neq k'\in V_0}  w(kk') 
     + \sum_{l \neq l' \in V_1}  w(ll') 
     - \sum_{k \in V_0-\{i'\}} \sum_{l \in V_1} w(kl)x_k x_l \nonumber \right.\\ 
     && \left.
     - \sum_{l \in V_1} w(i'l)x_{i} x_l 
     - \sum_{k \neq k'\in V_0-\{i'\}}  w(kk') x_{k}x_{k'} 
     - \sum_{k \in V_0-\{i'\}}  w(ki') x_{k}x_{i} 
     - \sum_{l \neq l' \in V_1}  w(ll')  x_{l}x_{l'}  \right) \\
    &=& 
     \frac{1}{2} \left(
     \sum_{k \in V_0} \sum_{l \in V_1} w(kl) 
     + \sum_{k \neq k'\in V_0}  w(kk') 
     + \sum_{l \neq l' \in V_1}  w(ll') 
     - \sum_{k \in V_0-\{i'\}} \sum_{l \in V_1} w(kl)x_k x_l \nonumber \right.\\ 
     && \left.
     - \sum_{l \in V_1} w(i'l)x_{i} x_l 
     - \sum_{k \neq k' \in V_0-\{i'\}}  w(kk') x_{k}x_{k'} 
     - \sum_{k \in V_0-\{i,i'\}}  w(ki') x_{k}x_{i} 
     - w(ii')
     - \sum_{l \neq l' \in V_1}  w(ll')  x_{l}x_{l'}  \right) \\
     &=& 
     \frac{1}{2} \left( 
     \sum_{k \in V_0} \sum_{l \in V_1} w(kl) 
     + \sum_{k \neq k'\in V_0}  w(kk') 
     + \sum_{l \neq l' \in V_1}  w(ll') 
     \underbrace{- \sum_{k \in V_0-\{i'\}} \sum_{l \in V_1} w(kl)
     - \sum_{l \in V_1} w(i'l)}
     _{= -\sum_{k \in V_0} \sum_{l \in V_1} w(kl)}
     \nonumber \right.\\ 
     && \left.
     \underbrace{- \sum_{k \neq k' \in V_0-\{i'\}}  w(kk') 
     - \sum_{k \in V_0-\{i,i'\}}  w(ki')
     - w(ii')}
     _{= -\sum_{k \neq k' \in V_0}  w(kk')}
     - \sum_{l \neq l' \in V_1}  w(ll')   \right) + C_{n-1,m}^w(\bold{x}') \\
     &=& C_{n-1,m}^w(\bold{x}').
 \end{eqnarray}
 Here, $C_{n-1,m}^w(\bold{x}')$ is the MAX-CUT cost function of the reduced graph ${G_{n-1,m}=(V_0 -\{i'\} \cup V_1, E',w',\sigma)}$ which can be formulated as
 $$
  C_{n-1,m}^w(\bold{x}')= \frac{1}{2} \sum_{e \in E'} w'(e)(1 - x'_e),
 $$
 where $E'=E-\{e^*\}$, $\bold{x}'=(x_1,\dots, \hat{x_{i'}}, \dots, x_{n+m}) \in \{-1,1\}^{n+m-1}$ is the vector $\bold{x} \in \{-1,1\}^{n+m}$ with the component $x_{i'}$ excluded, and 
 $$
 w'(e) = 
 \begin{cases}
    w(e)
     & \mathrm{if}~ i \notin e \in E, \\
    w(ki) + w(ki') & \mathrm{if}~e=(k,i)\in E_0^-~ \mathrm{for}~ k \in V_0 - \{i,i'\}, \\
    w(il) + w(i'l) & \mathrm{if}~e=(i,l) \in E_2^+~ \mathrm{for}~ l \in V_1.
    \end{cases}
 $$
Thus, the value of the MAX-CUT cost function does not change after identifying two correlated vertices with the same sign. Since $w(ki)^- + w(ki')^- <0$ and $ w(il)^+ + w(i'l)^+>0$, the reduced graph $G_{n-1,m}$ is also a graph with parity-signed weights as shown in Fig.~\ref{fig:reduction}.

  Case~2) Assume that we pick $e^* \in E_2^+$. Without loss of generality, let $e^* = (i,j) \in E_2^+$. 
 Similarly, we can rewrite the MAX-CUT cost function as
 \begin{eqnarray}
 \label{eq:w_maxcut_n,m_exp2}
     C_{n,m}^w(\bold{x}) &=& 
     \frac{1}{2} \left(
     \sum_{k \in V_0} \sum_{l \in V_1} w(kl) 
     + \sum_{k \neq k'\in V_0}  w(kk') 
     + \sum_{l \neq l' \in V_1}  w(ll') 
     - \sum_{k \in V_0-\{i\}} \sum_{l \in V_1} w(kl)x_k x_l \nonumber \right.\\ 
     && \left.
     - \sum_{l \in V_1} w(il)x_{i} x_l 
     - \sum_{k\neq k' \in V_0-\{i\}}  w(kk') x_{k}x_{k'} 
     - \sum_{k \in V_0-\{i\}}  w(ki) x_{k}x_{i} 
     - \sum_{l \neq l' \in V_1}  w(ll')  x_{l}x_{l'} \right).
 \end{eqnarray}
    We impose the constraint $x_{i}= -x_{j}$ on the cost function $C_{n,m}^w(\bold{x})$ to get the new cost function

 \begin{eqnarray}
 \label{eq:w_maxcut_n,m_expension2}
     C_{n,m}^w(\bold{x}) 
     &=& 
     \frac{1}{2} \left(
     \sum_{k \in V_0} \sum_{l \in V_1} w(kl) 
     + \sum_{k \neq k'\in V_0}  w(kk') 
     + \sum_{l \neq l' \in V_1}  w(ll') 
     - \sum_{k \in V_0-\{i\}} \sum_{l \in V_1} w(kl)x_k x_l \nonumber \right.\\ 
     && \left.
     - \sum_{l \in V_1} w(il)(-x_{j}) x_l 
     - \sum_{k \neq k'\in V_0-\{i\}}  w(kk') x_{k}x_{k'} 
     - \sum_{k \in V_0-\{i\}}  w(ki) x_{k}(-x_{j}) 
     - \sum_{l \neq l' \in V_1}  w(ll')  x_{l}x_{l'}  \right) \\
     &=& 
     \frac{1}{2} \left(
     \sum_{k \in V_0} \sum_{l \in V_1} w(kl) 
     + \sum_{k \neq k'\in V_0}  w(kk') 
     + \sum_{l \neq l' \in V_1}  w(ll') 
     - \sum_{k \in V_0-\{i\}} \sum_{l \in V_1} w(kl)x_k x_l
     - \sum_{l \in V_1-\{j\}} (-w(il))x_{j} x_l 
     \nonumber \right.\\ 
     && \left.
     - (-w(ij))
     - \sum_{k \neq k'\in V_0-\{i\}}  w(kk') x_{k}x_{k'} 
     - \sum_{k \in V_0-\{i\}}  (-w(ki)) x_{k}x_{j} 
     - \sum_{l \neq l' \in V_1}  w(ll')  x_{l}x_{l'}  \right) \\
     &=& 
     \frac{1}{2} \left( 
     \sum_{k \in V_0} \sum_{l \in V_1} w(kl) 
     + \sum_{k \neq k'\in V_0}  w(kk') 
     + \sum_{l \neq l' \in V_1}  w(ll') 
     \underbrace{- \sum_{k \in V_0-\{i\}} \sum_{l \in V_1} w(kl)
     - \sum_{l \in V_1-\{j\}} (-w(il))- (-w(ij)) }
     _{= -\sum_{k \in V_0} \sum_{l \in V_1} w(kl) + 2\sum_{l \in V_1} w(il)}
     \nonumber \right.\\ 
     && \left.
     \underbrace{- \sum_{k \neq k' \in V_0-\{i\}}  w(kk') 
     - \sum_{k \in V_0-\{i\}} (-w(ki))}
     _{= -\sum_{k \neq k' \in V_0}  w(kk') + 2\sum_{k \in V_0-\{i\}} w(ki)}
     - \sum_{l \neq l' \in V_1}  w(ll')   \right) + C_{n-1,m}^w(\bold{x}') \\
     &=& \sum_{l \in V_1} w(il) + \sum_{k \in V_0-\{i\}} w(ki) + C_{n-1,m}^w(\bold{x}').
 \end{eqnarray} 
Here,  $C_{n-1,m}^w(\bold{x}')$ is the MAX-CUT cost function of the reduced graph ${G_{n-1,m}=(V_0 -\{i\} \cup V_1, E',w',\sigma)}$  which can be formulated as
 $$
  C_{n-1,m}^w(\bold{x}')= \frac{1}{2} \sum_{e \in E'} w'(e)(1 - x'_e),
 $$
 where $E'=E-\{e^*\}$, $\bold{x}'=(x_1,\dots, \hat{x_{i}}, \dots, x_{n+m}) \in \{-1,1\}^{n+m-1}$ is the vector $\bold{x} \in \{-1,1\}^{n+m}$ with the component $x_{i}$ excluded, and 
 $$
 w'(e) = 
 \begin{cases}
    w(e)
     & \mathrm{if}~ i \notin e \in E, \\
    w(kj)-w(ki) & \mathrm{if}~e=(k,j) \in E_2^+~ \mathrm{for}~  k \in V_0 - \{i\}, \\
     -w(il) + w(jl) & \mathrm{if}~e=(j,l) \in E_1^-~ \mathrm{for}~ l \in V_1 - \{j\}.
    \end{cases}
 $$

 Note that the sum of edge weights $\sum_{l \in V_1} w(il) + \sum_{k \in V_0-\{i\}} w(ki)$ in the new cost function ${C_{n,m}^w}'(\bold{x})$ exactly corresponds to the total sum of edge weights gained or lost in the edge set $E_2^+$ when the edge $(i,j)$ is removed.
 Therefore, we have also proved that the maximum value of the cost function will not change after identifying two anti-correlated vertices with opposite sign. Moreoever, since $ w(kj)^+-w(ki)^->0$ and $-w(il)^+ + w(jl)^- <0$, the reduced graph $G_{n-1,m}$ is also a graph with parity-signed weights as shown in Fig.~\ref{fig:reduction}.

\section{Appendix: Proof of Lemma~\ref{lem:our1}}
\label{appendix:pf_lem:our1}
In this section, we provide the detailed proof of Lemma~\ref{lem:our1}. For simplicity, we will denote the edge weight $w(ij)$ as $w_{ij}$ for an edge $e=(i,j)$ in this section.
\begin{proof}
    We can use the analytic form of the edge expectation values in the level-$1$ QAOA subroutine for weighted graphs~\cite{VDKAL24}. Let us denote $\mathcal{N}_i$ and $\mathcal{T}_{ij}$ as the set of vertices that are adjacent to a vertex $i$ and form a triangle with the edge $(i,j)$, respectively. Then the edge expectation value of the edge $(i,j)$ is 
    \begin{eqnarray*}
        \left< Z_iZ_j \right> &=& -\frac{1}{2}\sin(4\beta)\sin(w_{ij}\gamma)\left( \prod_{k\neq j \in \mathcal{N}_i}\cos(w_{ki}\gamma) + \prod_{l\neq i \in \mathcal{N}_j} \cos(w_{lj}\gamma) \right) \\
        && - \frac{1}{2}\sin^2(2\beta) \prod_{\substack{k\neq j \in \mathcal{N}_i \\ k \notin \mathcal{T}_{ij}}}\cos(w_{ki}\gamma)  \prod_{\substack{l\neq j \in \mathcal{N}_j \\ l \notin \mathcal{T}_{ij}}}\cos(w_{lj}\gamma) 
        \left[ \prod_{t\in \mathcal{T}_{ij}}\cos\left((w_{it} + w_{tj})\gamma\right) 
        - \prod_{t \in \mathcal{T}_{ij}}\cos\left((w_{it} - w_{tj})\gamma\right)
        \right].
    \end{eqnarray*}
    Case~1) Assume that we pick $e^*=(i^*,j^*) \in E_0^- \cup E_1^-$ to be eliminated. Without loss of generality, let $e^* \in E_0^-$. Let $E_{i^*}=\{(i^*,k)\in E : k \neq j^*\}$ be the set of edges connected to the vertex $i^*$ excluding the edge $e^*$, and $F_{i^*}=\{(i^*,k)\in E : k \in \mathcal{T}_{i^*j^*}\}$ be the set of edges connected to the vertex $i^*$, together with the edge $e^*$, form triangles.  
    Now, we can rewrite the edge expectation value in terms of the edges with these notations as 
    \begin{eqnarray*}
        \left< Z_{i^*}Z_{j^*} \right> &=& -\frac{1}{2}\sin(4\beta)\sin(w_{e^*}^-\gamma)
        \underbrace{\left( \prod_{e_0 \in E_{i^*}\cap E_0^-}\cos(w_{e_0}^-\gamma) \prod_{e_2 \in E_{i^*}\cap E_2^+}\cos(w_{e_2}^+\gamma)
        + \prod_{e_0' \in E_{j^*}\cap E_0^-}\cos(w_{e_0'}^-\gamma) \prod_{e_2' \in E_{j^*}\cap E_2^+}\cos(w_{e_2'}^+\gamma)
        \right)}_{(*)} \\
        && - \frac{1}{2}\sin^2(2\beta) 
        \underbrace{\prod_{\substack{e_0 \in E_{i^*}\cap E_0^- \\ e_0 \notin F_{i^*}}}\cos(w_{e_0}^-\gamma)  
        \prod_{\substack{e_2 \in E_{i^*}\cap E_2^+ \\ e_2 \notin F_{i^*}}}\cos(w_{e_2}^+\gamma)
        \prod_{\substack{e_0' \in E_{j^*}\cap E_0^- \\ e_0' \notin F_{j^*}}}\cos(w_{e_0'}^-\gamma)  
        \prod_{\substack{e_2' \in E_{j^*}\cap E_2^+ \\ e_2' \notin F_{j^*}}}\cos(w_{e_2'}^+\gamma)}_{(**)} \\
        && \times \underbrace{\left[ \
        \prod_{\substack{f_0 \in F_{i^*}\cap E_0^- \\ f_2 \in F_{i^*}\cap E_2^+}}
        \cos((w_{f_0}^- + w_{\tilde{f_0}}^-)\gamma) \cos((w_{f_2}^+ + w_{\tilde{f_2}}^+)\gamma)
        - \prod_{\substack{f_0 \in F_{i^*}\cap E_0^- \\ f_2 \in F_{i^*}\cap E_2^+}}
        \cos((w_{f_0}^- - w_{\tilde{f_0}}^-)\gamma) \cos((w_{f_2}^+ - w_{\tilde{f_2}}^+)\gamma)
        \right]}_{(***)},
    \end{eqnarray*}
    where $\tilde{f}$ represents the third edge of the triangle formed by the edges $e^*$ and $f$. We denote the negative and positive edge weight of the edge $e$ by $w_e^-$ and $w_e^+$, respectively. We can easily see that if $0 < \left|\gamma \right| \le \frac{\pi}{2 w_{ij}^*}$ with $w_{ij}^* = \max_{(i,j) \in E} |w_{ij}|$, the terms $(*)$ and $(**)$ are always positive while the term $(***)$ is negative because the cosine function is decreasing on that region. Thus, the edge expectation values of the edges belonging to $E_0^-$ (and $E_1^-$) are always positive.

    Case~2) Assume that we pick $e^*=(i^*,j^*) \in E_2^+$ to be eliminated. We can apply a similar argument to the above case. The main differences will be the sign of $\sin(w_{e^*}^+\gamma)$ and the term $(***)$. Let us rewrite the edge expectation value for this case as
    \begin{eqnarray*}
        \left< Z_{i^*}Z_{j^*} \right> &=& -\frac{1}{2}\sin(4\beta)\sin(w_{e^*}^+\gamma)
        \underbrace{\left( \prod_{e_0 \in E_{i^*}\cap E_0^-}\cos(w_{e_0}^-\gamma) \prod_{e_2 \in E_{i^*}\cap E_2^+}\cos(w_{e_2}^+\gamma)
        + \prod_{e_1 \in E_{j^*}\cap E_1^-}\cos(w_{e_1}^-\gamma) \prod_{e_2' \in E_{j^*}\cap E_2^+}\cos(w_{e_2'}^+\gamma)
        \right)}_{(*)} \\
        && - \frac{1}{2}\sin^2(2\beta) 
       \underbrace{\prod_{\substack{e_0 \in E_{i^*}\cap E_0^- \\ e_0 \notin F_{i^*}}}\cos(w_{e_0}^-\gamma)  
        \prod_{\substack{e_2 \in E_{i^*}\cap E_2^+ \\ e_2 \notin F_{i^*}}}\cos(w_{e_2}^+\gamma)
        \prod_{\substack{e_1 \in E_{j^*}\cap E_1^- \\ e_1 \notin F_{j^*}}}\cos(w_{e_2}^-\gamma)  
        \prod_{\substack{e_2' \in E_{j^*}\cap E_2^+ \\ e_2' \notin F_{j^*}}}\cos(w_{e_2'}^+\gamma)}_{(**)} \\
        && \times \underbrace{\left[ \
        \prod_{\substack{f_0 \in F_{i^*}\cap E_0^- \\ f_2 \in F_{i^*}\cap E_2^+}}
        \cos((w_{f_0}^- + w_{\tilde{f_0}}^+)\gamma) \cos((w_{f_2}^+ + w_{\tilde{f_2}}^-)\gamma)
        - \prod_{\substack{f_1 \in F_{i^*}\cap E_1^- \\ f_2 \in F_{i^*}\cap E_2^+}}
        \cos((w_{f_1}^- - w_{\tilde{f_1}}^+)\gamma) \cos((w_{f_2}^+ - w_{\tilde{f_2}}^-)\gamma)
        \right]}_{(***)}.
    \end{eqnarray*}
Suppose that $0 < \left|\gamma \right| \le \frac{\pi}{2 w^*}$ with $w^* = \max_{(i,j) \in E} w_{ij}$. Similar to Case 1), the terms $(*)$ and $(**)$ are always positive. However, in contrast to Case 1), $w_{e^*} > 0$ so that $\sin(w_{e^*}\gamma)$ has the opposite sign, and the term $(***)$ is positive in this case because the cosine function is decreasing on that region together with the fact that
$$|(w_{f_0}^- + w_{\tilde{f_0}}^+)\gamma| \le |(w_{f_1}^- - w_{\tilde{f_1}}^+)\gamma|
~~\text{and}~~
|(w_{f_2}^+ + w_{\tilde{f_2}}^-)\gamma| \le |(w_{f_2}^+ - w_{\tilde{f_2}}^-)\gamma|.
$$
Therefore, the edge expectation values of the edges belonging to $E_2^+$ are always negative.
\end{proof}
 
\end{appendices}

\bibliography{RQAOA}

\begin{thebibliography}{34}
\expandafter\ifx\csname natexlab\endcsname\relax\def\natexlab#1{#1}\fi
\expandafter\ifx\csname bibnamefont\endcsname\relax
  \def\bibnamefont#1{#1}\fi
\expandafter\ifx\csname bibfnamefont\endcsname\relax
  \def\bibfnamefont#1{#1}\fi
\expandafter\ifx\csname citenamefont\endcsname\relax
  \def\citenamefont#1{#1}\fi
\expandafter\ifx\csname url\endcsname\relax
  \def\url#1{\texttt{#1}}\fi
\expandafter\ifx\csname urlprefix\endcsname\relax\def\urlprefix{URL }\fi
\providecommand{\bibinfo}[2]{#2}
\providecommand{\eprint}[2][]{\url{#2}}

\bibitem[{\citenamefont{Cerezo et~al.}(2021)\citenamefont{Cerezo, Arrasmith, Babbush, Benjamin, Endo, Fujii, McClean, Mitarai, Yuan, Cincio et~al.}}]{MAR+21}
\bibinfo{author}{\bibfnamefont{M.}~\bibnamefont{Cerezo}}, \bibinfo{author}{\bibfnamefont{A.}~\bibnamefont{Arrasmith}}, \bibinfo{author}{\bibfnamefont{R.}~\bibnamefont{Babbush}}, \bibinfo{author}{\bibfnamefont{S.~C.} \bibnamefont{Benjamin}}, \bibinfo{author}{\bibfnamefont{S.}~\bibnamefont{Endo}}, \bibinfo{author}{\bibfnamefont{K.}~\bibnamefont{Fujii}}, \bibinfo{author}{\bibfnamefont{J.~R.} \bibnamefont{McClean}}, \bibinfo{author}{\bibfnamefont{K.}~\bibnamefont{Mitarai}}, \bibinfo{author}{\bibfnamefont{X.}~\bibnamefont{Yuan}}, \bibinfo{author}{\bibfnamefont{L.}~\bibnamefont{Cincio}}, \bibnamefont{et~al.}, \bibinfo{journal}{Nature Reviews Physics} \textbf{\bibinfo{volume}{3}}, \bibinfo{pages}{625} (\bibinfo{year}{2021}).

\bibitem[{\citenamefont{Preskill}(2018)}]{Pre18}
\bibinfo{author}{\bibfnamefont{J.}~\bibnamefont{Preskill}}, \bibinfo{journal}{Quantum} \textbf{\bibinfo{volume}{2}}, \bibinfo{pages}{79} (\bibinfo{year}{2018}).

\bibitem[{\citenamefont{Moll et~al.}(2018)\citenamefont{Moll, Barkoutsos, Bishop, Chow, Cross, Egger, Filipp, Fuhrer, Gambetta, and Ganzhorn}}]{MBB+18}
\bibinfo{author}{\bibfnamefont{N.}~\bibnamefont{Moll}}, \bibinfo{author}{\bibfnamefont{P.}~\bibnamefont{Barkoutsos}}, \bibinfo{author}{\bibfnamefont{L.~S.} \bibnamefont{Bishop}}, \bibinfo{author}{\bibfnamefont{J.~M.} \bibnamefont{Chow}}, \bibinfo{author}{\bibfnamefont{A.}~\bibnamefont{Cross}}, \bibinfo{author}{\bibfnamefont{D.~J.} \bibnamefont{Egger}}, \bibinfo{author}{\bibfnamefont{S.}~\bibnamefont{Filipp}}, \bibinfo{author}{\bibfnamefont{A.}~\bibnamefont{Fuhrer}}, \bibinfo{author}{\bibfnamefont{J.~M.} \bibnamefont{Gambetta}}, \bibnamefont{and} \bibinfo{author}{\bibfnamefont{M.}~\bibnamefont{Ganzhorn}}, \bibinfo{journal}{Quantum Science and Technology} \textbf{\bibinfo{volume}{3}}, \bibinfo{pages}{030503} (\bibinfo{year}{2018}).

\bibitem[{\citenamefont{Shor}(1997)}]{Shor}
\bibinfo{author}{\bibfnamefont{P.~M.} \bibnamefont{Shor}}, \bibinfo{journal}{SIAM Journal on Computing} \textbf{\bibinfo{volume}{26}}, \bibinfo{pages}{1484} (\bibinfo{year}{1997}).

\bibitem[{\citenamefont{Peruzzo et~al.}(2014)\citenamefont{Peruzzo, McClean, Shadbolt, Yung, Zhou, Love, Aspuru-Guzik, and O’Brienc}}]{PMS+14}
\bibinfo{author}{\bibfnamefont{A.}~\bibnamefont{Peruzzo}}, \bibinfo{author}{\bibfnamefont{J.}~\bibnamefont{McClean}}, \bibinfo{author}{\bibfnamefont{P.}~\bibnamefont{Shadbolt}}, \bibinfo{author}{\bibfnamefont{M.-H.} \bibnamefont{Yung}}, \bibinfo{author}{\bibfnamefont{X.-Q.} \bibnamefont{Zhou}}, \bibinfo{author}{\bibfnamefont{P.~J.} \bibnamefont{Love}}, \bibinfo{author}{\bibfnamefont{A.}~\bibnamefont{Aspuru-Guzik}}, \bibnamefont{and} \bibinfo{author}{\bibfnamefont{J.~L.} \bibnamefont{O’Brienc}}, \bibinfo{journal}{Nature Communications} \textbf{\bibinfo{volume}{5}}, \bibinfo{pages}{4213} (\bibinfo{year}{2014}).

\bibitem[{\citenamefont{Farhi et~al.}({\natexlab{a}})\citenamefont{Farhi, Goldstone, and Gutmann}}]{FGG14}
\bibinfo{author}{\bibfnamefont{E.}~\bibnamefont{Farhi}}, \bibinfo{author}{\bibfnamefont{J.}~\bibnamefont{Goldstone}}, \bibnamefont{and} \bibinfo{author}{\bibfnamefont{S.}~\bibnamefont{Gutmann}}, \bibinfo{note}{{arXiv}:1411.4028}.

\bibitem[{\citenamefont{Tilly et~al.}(2022)\citenamefont{Tilly, Chen, Cao, Picozzi, Setia, Li, Grant, Wossnig, Rungger, Booth et~al.}}]{TCC+22}
\bibinfo{author}{\bibfnamefont{J.}~\bibnamefont{Tilly}}, \bibinfo{author}{\bibfnamefont{H.}~\bibnamefont{Chen}}, \bibinfo{author}{\bibfnamefont{S.}~\bibnamefont{Cao}}, \bibinfo{author}{\bibfnamefont{D.}~\bibnamefont{Picozzi}}, \bibinfo{author}{\bibfnamefont{K.}~\bibnamefont{Setia}}, \bibinfo{author}{\bibfnamefont{Y.}~\bibnamefont{Li}}, \bibinfo{author}{\bibfnamefont{E.}~\bibnamefont{Grant}}, \bibinfo{author}{\bibfnamefont{L.}~\bibnamefont{Wossnig}}, \bibinfo{author}{\bibfnamefont{I.}~\bibnamefont{Rungger}}, \bibinfo{author}{\bibfnamefont{G.~H.} \bibnamefont{Booth}}, \bibnamefont{et~al.}, \bibinfo{journal}{Physics Reports} \textbf{\bibinfo{volume}{986}}, \bibinfo{pages}{1} (\bibinfo{year}{2022}).

\bibitem[{\citenamefont{Schuld and Killoran}(2019)}]{SK19}
\bibinfo{author}{\bibfnamefont{M.}~\bibnamefont{Schuld}} \bibnamefont{and} \bibinfo{author}{\bibfnamefont{N.}~\bibnamefont{Killoran}}, \bibinfo{journal}{Physical review letters} \textbf{\bibinfo{volume}{122}}, \bibinfo{pages}{040504} (\bibinfo{year}{2019}).

\bibitem[{\citenamefont{Wang et~al.}(2018)\citenamefont{Wang, Hadfield, Jiang, and Rieffel}}]{WHJR18}
\bibinfo{author}{\bibfnamefont{Z.}~\bibnamefont{Wang}}, \bibinfo{author}{\bibfnamefont{S.}~\bibnamefont{Hadfield}}, \bibinfo{author}{\bibfnamefont{Z.}~\bibnamefont{Jiang}}, \bibnamefont{and} \bibinfo{author}{\bibfnamefont{E.~G.} \bibnamefont{Rieffel}}, \bibinfo{journal}{Physical Review A} \textbf{\bibinfo{volume}{97}}, \bibinfo{pages}{022304} (\bibinfo{year}{2018}).

\bibitem[{\citenamefont{Zhou et~al.}(2020)\citenamefont{Zhou, Wang, Choi, Pichler, and Lukin}}]{ZWC+20}
\bibinfo{author}{\bibfnamefont{L.}~\bibnamefont{Zhou}}, \bibinfo{author}{\bibfnamefont{S.-T.} \bibnamefont{Wang}}, \bibinfo{author}{\bibfnamefont{S.}~\bibnamefont{Choi}}, \bibinfo{author}{\bibfnamefont{H.}~\bibnamefont{Pichler}}, \bibnamefont{and} \bibinfo{author}{\bibfnamefont{M.~D.} \bibnamefont{Lukin}}, \bibinfo{journal}{Physical Review X} \textbf{\bibinfo{volume}{10}}, \bibinfo{pages}{021067} (\bibinfo{year}{2020}).

\bibitem[{\citenamefont{Hastings}()}]{Has19}
\bibinfo{author}{\bibfnamefont{M.~B.} \bibnamefont{Hastings}}, \bibinfo{note}{{arXiv}:1905.07047}.

\bibitem[{\citenamefont{Bravyi et~al.}(2020)\citenamefont{Bravyi, Kliesch, K\"{o}nig, and Tang}}]{BKKT19}
\bibinfo{author}{\bibfnamefont{S.}~\bibnamefont{Bravyi}}, \bibinfo{author}{\bibfnamefont{A.}~\bibnamefont{Kliesch}}, \bibinfo{author}{\bibfnamefont{R.}~\bibnamefont{K\"{o}nig}}, \bibnamefont{and} \bibinfo{author}{\bibfnamefont{E.}~\bibnamefont{Tang}}, \bibinfo{journal}{Physical Review Letters} \textbf{\bibinfo{volume}{125}}, \bibinfo{pages}{260504} (\bibinfo{year}{2020}).

\bibitem[{\citenamefont{Farhi et~al.}({\natexlab{b}})\citenamefont{Farhi, Goldstone, and Gutmann}}]{FGG20}
\bibinfo{author}{\bibfnamefont{E.}~\bibnamefont{Farhi}}, \bibinfo{author}{\bibfnamefont{J.}~\bibnamefont{Goldstone}}, \bibnamefont{and} \bibinfo{author}{\bibfnamefont{S.}~\bibnamefont{Gutmann}}, \bibinfo{note}{{arXiv}:2004.09002}.

\bibitem[{\citenamefont{Marwaha}(2021)}]{Mar21}
\bibinfo{author}{\bibfnamefont{K.}~\bibnamefont{Marwaha}}, \bibinfo{journal}{Quantum} \textbf{\bibinfo{volume}{5}}, \bibinfo{pages}{437} (\bibinfo{year}{2021}).

\bibitem[{\citenamefont{Barak and Marwaha}()}]{BM22}
\bibinfo{author}{\bibfnamefont{B.}~\bibnamefont{Barak}} \bibnamefont{and} \bibinfo{author}{\bibfnamefont{K.}~\bibnamefont{Marwaha}}, \bibinfo{note}{{arXiv}:2106.05900}.

\bibitem[{\citenamefont{Hadfield et~al.}(2019)\citenamefont{Hadfield, Wang, O’Gorman, Rieffel, Venturelli, and Biswas}}]{HWO+19}
\bibinfo{author}{\bibfnamefont{S.}~\bibnamefont{Hadfield}}, \bibinfo{author}{\bibfnamefont{Z.}~\bibnamefont{Wang}}, \bibinfo{author}{\bibfnamefont{B.}~\bibnamefont{O’Gorman}}, \bibinfo{author}{\bibfnamefont{E.~G.} \bibnamefont{Rieffel}}, \bibinfo{author}{\bibfnamefont{D.}~\bibnamefont{Venturelli}}, \bibnamefont{and} \bibinfo{author}{\bibfnamefont{R.}~\bibnamefont{Biswas}}, \bibinfo{journal}{Algorithms} \textbf{\bibinfo{volume}{12}}, \bibinfo{pages}{34} (\bibinfo{year}{2019}).

\bibitem[{\citenamefont{Egger et~al.}(2021)\citenamefont{Egger, Marecek, and Woerner}}]{EMW21}
\bibinfo{author}{\bibfnamefont{D.~J.} \bibnamefont{Egger}}, \bibinfo{author}{\bibfnamefont{J.}~\bibnamefont{Marecek}}, \bibnamefont{and} \bibinfo{author}{\bibfnamefont{S.}~\bibnamefont{Woerner}}, \bibinfo{journal}{Quantum} \textbf{\bibinfo{volume}{5}}, \bibinfo{pages}{479} (\bibinfo{year}{2021}).

\bibitem[{\citenamefont{Herrman et~al.}(2022)\citenamefont{Herrman, Lotshaw, Ostrowski, Humble, and Siopsis}}]{HLO+22}
\bibinfo{author}{\bibfnamefont{R.}~\bibnamefont{Herrman}}, \bibinfo{author}{\bibfnamefont{P.~C.} \bibnamefont{Lotshaw}}, \bibinfo{author}{\bibfnamefont{J.}~\bibnamefont{Ostrowski}}, \bibinfo{author}{\bibfnamefont{T.~S.} \bibnamefont{Humble}}, \bibnamefont{and} \bibinfo{author}{\bibfnamefont{G.}~\bibnamefont{Siopsis}}, \bibinfo{journal}{Scientific Reports} \textbf{\bibinfo{volume}{12}}, \bibinfo{pages}{6781} (\bibinfo{year}{2022}).

\bibitem[{\citenamefont{Zhu et~al.}(2022)\citenamefont{Zhu, Tang, Barron, Calderon-Vargas, Mayhall, Barnes, and Economou}}]{ZTB+22}
\bibinfo{author}{\bibfnamefont{L.}~\bibnamefont{Zhu}}, \bibinfo{author}{\bibfnamefont{H.~L.} \bibnamefont{Tang}}, \bibinfo{author}{\bibfnamefont{G.~S.} \bibnamefont{Barron}}, \bibinfo{author}{\bibfnamefont{F.~A.} \bibnamefont{Calderon-Vargas}}, \bibinfo{author}{\bibfnamefont{N.~J.} \bibnamefont{Mayhall}}, \bibinfo{author}{\bibfnamefont{E.}~\bibnamefont{Barnes}}, \bibnamefont{and} \bibinfo{author}{\bibfnamefont{S.~E.} \bibnamefont{Economou}}, \bibinfo{journal}{Physical Review Research} \textbf{\bibinfo{volume}{4}}, \bibinfo{pages}{033029} (\bibinfo{year}{2022}).

\bibitem[{\citenamefont{Vijendran et~al.}(2024)\citenamefont{Vijendran, Das, Koh, Assad, and Lam}}]{VDKAL24}
\bibinfo{author}{\bibfnamefont{V.}~\bibnamefont{Vijendran}}, \bibinfo{author}{\bibfnamefont{A.}~\bibnamefont{Das}}, \bibinfo{author}{\bibfnamefont{D.~E.} \bibnamefont{Koh}}, \bibinfo{author}{\bibfnamefont{S.~M.} \bibnamefont{Assad}}, \bibnamefont{and} \bibinfo{author}{\bibfnamefont{P.~K.} \bibnamefont{Lam}}, \bibinfo{journal}{Quantum Science and Technology} \textbf{\bibinfo{volume}{9}}, \bibinfo{pages}{025010} (\bibinfo{year}{2024}).

\bibitem[{\citenamefont{Bravyi et~al.}()\citenamefont{Bravyi, Gosset, Grier, and Schaeffer}}]{BGGS21}
\bibinfo{author}{\bibfnamefont{S.}~\bibnamefont{Bravyi}}, \bibinfo{author}{\bibfnamefont{D.}~\bibnamefont{Gosset}}, \bibinfo{author}{\bibfnamefont{D.}~\bibnamefont{Grier}}, \bibnamefont{and} \bibinfo{author}{\bibfnamefont{L.}~\bibnamefont{Schaeffer}}, \bibinfo{note}{{arXiv}:2102.06963}.

\bibitem[{\citenamefont{Bravyi et~al.}(2022)\citenamefont{Bravyi, Kliesch, K\"{o}nig, and Tang}}]{BKKT22}
\bibinfo{author}{\bibfnamefont{S.}~\bibnamefont{Bravyi}}, \bibinfo{author}{\bibfnamefont{A.}~\bibnamefont{Kliesch}}, \bibinfo{author}{\bibfnamefont{R.}~\bibnamefont{K\"{o}nig}}, \bibnamefont{and} \bibinfo{author}{\bibfnamefont{E.}~\bibnamefont{Tang}}, \bibinfo{journal}{Quantum} \textbf{\bibinfo{volume}{6}}, \bibinfo{pages}{678} (\bibinfo{year}{2022}).

\bibitem[{\citenamefont{Patel et~al.}(2024)\citenamefont{Patel, Jerbi, Bäck, and Dunjko}}]{PJBD24}
\bibinfo{author}{\bibfnamefont{Y.~J.} \bibnamefont{Patel}}, \bibinfo{author}{\bibfnamefont{S.}~\bibnamefont{Jerbi}}, \bibinfo{author}{\bibfnamefont{T.}~\bibnamefont{Bäck}}, \bibnamefont{and} \bibinfo{author}{\bibfnamefont{V.}~\bibnamefont{Dunjko}}, \bibinfo{journal}{EPJ Quantum Technology} \textbf{\bibinfo{volume}{11}}, \bibinfo{pages}{6} (\bibinfo{year}{2024}).

\bibitem[{\citenamefont{Bae and Lee}(2024)}]{BL24}
\bibinfo{author}{\bibfnamefont{E.}~\bibnamefont{Bae}} \bibnamefont{and} \bibinfo{author}{\bibfnamefont{S.}~\bibnamefont{Lee}}, \bibinfo{journal}{Quantum Information Processing} \textbf{\bibinfo{volume}{23}}, \bibinfo{pages}{78} (\bibinfo{year}{2024}).

\bibitem[{\citenamefont{Fischer et~al.}()\citenamefont{Fischer, Passek, Wagner, Finžgar, Palackal, and Mendl}}]{FPW+24}
\bibinfo{author}{\bibfnamefont{V.}~\bibnamefont{Fischer}}, \bibinfo{author}{\bibfnamefont{M.}~\bibnamefont{Passek}}, \bibinfo{author}{\bibfnamefont{F.}~\bibnamefont{Wagner}}, \bibinfo{author}{\bibfnamefont{J.~R.} \bibnamefont{Finžgar}}, \bibinfo{author}{\bibfnamefont{L.}~\bibnamefont{Palackal}}, \bibnamefont{and} \bibinfo{author}{\bibfnamefont{C.~B.} \bibnamefont{Mendl}}, \bibinfo{note}{{arXiv}:2404.17242}.

\bibitem[{\citenamefont{Finžgar et~al.}(2024)\citenamefont{Finžgar, Kerschbaumer, Schuetz, Mend, and Katzgraber}}]{FKS+24}
\bibinfo{author}{\bibfnamefont{J.~R.} \bibnamefont{Finžgar}}, \bibinfo{author}{\bibfnamefont{A.}~\bibnamefont{Kerschbaumer}}, \bibinfo{author}{\bibfnamefont{M.~J.} \bibnamefont{Schuetz}}, \bibinfo{author}{\bibfnamefont{C.~B.} \bibnamefont{Mend}}, \bibnamefont{and} \bibinfo{author}{\bibfnamefont{H.~G.} \bibnamefont{Katzgraber}}, \bibinfo{journal}{PRX Quantum} \textbf{\bibinfo{volume}{5}}, \bibinfo{pages}{020327} (\bibinfo{year}{2024}).

\bibitem[{\citenamefont{Brady and Hadfield}()}]{BH23}
\bibinfo{author}{\bibfnamefont{L.~T.} \bibnamefont{Brady}} \bibnamefont{and} \bibinfo{author}{\bibfnamefont{S.}~\bibnamefont{Hadfield}}, \bibinfo{note}{{arXiv}:2309.13110}.

\bibitem[{\citenamefont{Wagner et~al.}()\citenamefont{Wagner, Nüßlein, and Liers}}]{WNL23}
\bibinfo{author}{\bibfnamefont{F.}~\bibnamefont{Wagner}}, \bibinfo{author}{\bibfnamefont{J.}~\bibnamefont{Nüßlein}}, \bibnamefont{and} \bibinfo{author}{\bibfnamefont{F.}~\bibnamefont{Liers}}, \bibinfo{note}{{arXiv}:2302.05493}.

\bibitem[{\citenamefont{Sureshbabu et~al.}(2024)\citenamefont{Sureshbabu, Herman, Shaydulin, Basso, Chakrabarti, Sun, and Pistoia}}]{SHS+24}
\bibinfo{author}{\bibfnamefont{S.~H.} \bibnamefont{Sureshbabu}}, \bibinfo{author}{\bibfnamefont{D.}~\bibnamefont{Herman}}, \bibinfo{author}{\bibfnamefont{R.}~\bibnamefont{Shaydulin}}, \bibinfo{author}{\bibfnamefont{J.}~\bibnamefont{Basso}}, \bibinfo{author}{\bibfnamefont{S.}~\bibnamefont{Chakrabarti}}, \bibinfo{author}{\bibfnamefont{Y.}~\bibnamefont{Sun}}, \bibnamefont{and} \bibinfo{author}{\bibfnamefont{M.}~\bibnamefont{Pistoia}}, \bibinfo{journal}{Quantum} \textbf{\bibinfo{volume}{8}}, \bibinfo{pages}{1231} (\bibinfo{year}{2024}).

\bibitem[{\citenamefont{Shaydulin et~al.}(2023)\citenamefont{Shaydulin, Lotshaw, Larson, Ostrowski, and Humble}}]{SLL+23}
\bibinfo{author}{\bibfnamefont{R.}~\bibnamefont{Shaydulin}}, \bibinfo{author}{\bibfnamefont{P.~C.} \bibnamefont{Lotshaw}}, \bibinfo{author}{\bibfnamefont{J.}~\bibnamefont{Larson}}, \bibinfo{author}{\bibfnamefont{J.}~\bibnamefont{Ostrowski}}, \bibnamefont{and} \bibinfo{author}{\bibfnamefont{T.~S.} \bibnamefont{Humble}}, \bibinfo{journal}{ACM Transactions on Quantum Computing} \textbf{\bibinfo{volume}{4}}, \bibinfo{pages}{1} (\bibinfo{year}{2023}).

\bibitem[{\citenamefont{Lyngfelt and García-Álvarez}()}]{LG24}
\bibinfo{author}{\bibfnamefont{I.}~\bibnamefont{Lyngfelt}} \bibnamefont{and} \bibinfo{author}{\bibfnamefont{L.}~\bibnamefont{García-Álvarez}}, \bibinfo{note}{arXiv:2407.04496v2}.

\bibitem[{\citenamefont{Ozaeta et~al.}(2022)\citenamefont{Ozaeta, van Dam, and McMahon}}]{ODM22}
\bibinfo{author}{\bibfnamefont{A.}~\bibnamefont{Ozaeta}}, \bibinfo{author}{\bibfnamefont{W.}~\bibnamefont{van Dam}}, \bibnamefont{and} \bibinfo{author}{\bibfnamefont{P.~L.} \bibnamefont{McMahon}}, \bibinfo{journal}{Quantum Science and Technology} \textbf{\bibinfo{volume}{7}}, \bibinfo{pages}{045036} (\bibinfo{year}{2022}).

\bibitem[{\citenamefont{Vijendran et~al.}()\citenamefont{Vijendran, Koh, Bae, Kwon, Lam, and Assad}}]{VKB+24}
\bibinfo{author}{\bibfnamefont{V.}~\bibnamefont{Vijendran}}, \bibinfo{author}{\bibfnamefont{D.~E.} \bibnamefont{Koh}}, \bibinfo{author}{\bibfnamefont{E.}~\bibnamefont{Bae}}, \bibinfo{author}{\bibfnamefont{H.}~\bibnamefont{Kwon}}, \bibinfo{author}{\bibfnamefont{P.~K.} \bibnamefont{Lam}}, \bibnamefont{and} \bibinfo{author}{\bibfnamefont{S.~M.} \bibnamefont{Assad}}, \bibinfo{note}{in preparation}.

\bibitem[{\citenamefont{Promela et~al.}(2002)\citenamefont{Promela, Schickingerb, and Stegerb}}]{PSS02}
\bibinfo{author}{\bibfnamefont{H.~J.} \bibnamefont{Promela}}, \bibinfo{author}{\bibfnamefont{T.}~\bibnamefont{Schickingerb}}, \bibnamefont{and} \bibinfo{author}{\bibfnamefont{A.}~\bibnamefont{Stegerb}}, \bibinfo{journal}{Discrete Mathematics} \textbf{\bibinfo{volume}{257}}, \bibinfo{pages}{531} (\bibinfo{year}{2002}).

\end{thebibliography}

\end{document}